\title{The $\ell^\infty$-Cophenetic Metric for Phylogenetic Trees as an Interleaving Distance}
\author[1]{Elizabeth Munch}
\author[2]{Anastasios Stefanou }
\affil[1]{Dept.~of Computational Mathematics, Science and Engineering; and Dept.~of Mathematics. 
	Michigan State University, East Lansing, MI. \url{muncheli@egr.msu.edu}}
\affil[2]{Dept.~of Mathematics and Statistics. University at Albany -- SUNY, Albany, NY. \url{astefanou@albany.edu}}
\date{}
\begin{document}
\maketitle

\begin{abstract}
There are many metrics available to compare phylogenetic trees since this is a fundamental task in computational biology.  
In this paper, we focus on one such metric, the $\ell^\infty$-cophenetic metric introduced by Cardona et al.
This metric works by representing a phylogenetic tree with $n$ labeled leaves as a point in $\R^{n(n+1)/2}$ known as the cophenetic vector, then comparing the two resulting Euclidean points using the $\ell^\infty$ distance.
Meanwhile, the interleaving distance is a formal categorical construction generalized from the definition of Chazal et al., originally introduced to compare persistence modules arising from the field of topological data analysis.
We show that the $\ell^\infty$-cophenetic metric is an example of an interleaving distance.
To do this, we define phylogenetic trees as a category of merge trees with some additional structure; namely labelings on the leaves plus a requirement that morphisms respect these labels. 
Then we can use the definition of a flow on this category to give an interleaving distance.
Finally, we show that, because of the additional structure given by the categories defined, the map sending a labeled merge tree to the cophenetic vector is, in fact, an isometric embedding, thus proving that the $\ell^\infty$-cophenetic metric is, in fact, an interleaving distance.
\end{abstract}


\section{Introduction}
Phylogenetic trees model the evolutionary relationships among various biological organisms or more general entities 
that evolve through time. 
Comparing two or more phylogenetic trees is a fundamental task in computational biology \cite{Diaconis1998}. 
Studying metrics on phylogenetic trees is of particular importance for phylogenetic tree reconstruction as well as for developing statistics and clustering techniques on phylogenetic trees.
More broadly, comparison techniques of phylogenetic trees find applications in the fields of biology, including bioinformatics, DNA sequences and viral evolution.
There are quite a few metrics for comparison of phylogenetic trees that have been proposed in the literature (e.g.~\cite{Robinson1981,Robinson1979,Bryant2000,Billera2001,Owen2011,Cardona2013,Mailund2004,Alberich2009,Fernau2010,Moulton2015,Valiente2001}, however this is by no means a complete list).
In this paper we focus on the $\ell^\infty$-cophenetic metric on phylogenetic trees which is one of the  $\ell^p$-type of metrics on phylogenetic trees proposed by Cardona et al.~\cite{Cardona2013}.
This metric works by representing a phylogenetic tree as a point in $\R^{n(n+1)/2}$, then giving the distance between two trees as the $\ell^\infty$ distance between the resulting points.

We think of phylogenetic trees as merge trees together with a choice of a labeling on the leaves.
Merge trees are a special case of a more general construction known as the Reeb graph which is one of the basic topics of study in Topological Data Analysis (TDA).  
These structures originally came from the study of Morse functions on manifolds \cite{Reeb1946} and found increased use through the visualization and graphics communities \cite{Biasotti2008}.
However, more recently they have been studied as objects of interest in their own right.
For the purposes of this paper, a Reeb graph is a topological graph $X$ (equivalently, a 1-dimensional stratified space) with a real-valued function $f: X \to \R$ which is monotone on edges.  
Combinatorially, we store this information as a graph with function values defined at the vertices where we interpolate the function linearly on the edges.
A merge tree is a Reeb graph where every vertex has exactly one neighbor with higher function value, and which has one edge whose function values go to $\infty$; we abuse notation and say that this vertex has an endpoint with function value $\infty$. 
See Fig.~\ref{fig:merge-tree} for an example.
\begin{figure}
	\centering 
	\includegraphics[width = .3\textwidth]{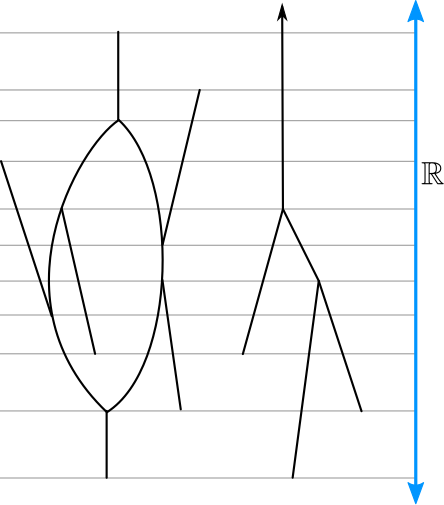}
	\caption[Example of a Reeb graph and merge tree.]{An example of a Reeb graph (left) and a merge tree (middle).  Figures are drawn with the implicit function to $\R$ (at right) given by height.}
	\label{fig:merge-tree}
\end{figure}

The reason for taking this viewpoint is that there has been increased interest in finding metrics for Reeb graphs
\cite{deSilva2016,Bauer2015,DiFabio2016,BauerDiFabioLandi2016,Beketayev2014,Morozov2013,Agarwal2015a,Bauer2014}, 
with a particular view towards understanding properties of a commonly used approximation of the Reeb graph known as mapper 
\cite{singh2007topological,Munch2016,Carriere2017b,Babu2013}. 
Here, we focus on the interleaving distance, which arose from the field of TDA as a method for comparing persistence modules and which generalized the commonly used bottleneck distance for persistence diagrams \cite{Chazal2009b,Chazal2016}.  
Categorified generalizations of these ideas \cite{Bubenik2014,Bubenik2014a,AnastasiosThesis,Silva2017} mean that interleaving distance (strictly, this is an extended pseudometric) can provide new metrics for many different types of input objects.
This extends to Reeb graphs by viewing them as a restricted class of set valued cosheaves over $\R$ \cite{deSilva2016,Curry2014}.
Thus,  merge trees inherit the interleaving distance by virtue of being a subcategory of Reeb graphs, but this be equivalently achieved by viewing merge trees as so-called persistent sets, namely functors $(\R,\leq) \to \Set$ \cite{Morozov2013,AnastasiosThesis}. 
The resulting distance is called the merge tree interleaving distance.

Computing the interleaving distance both on Reeb graphs and merge trees is NP-hard \cite{sidiropoulos9472computing,bjerkevik2017computational}. 
However if we restrict to labeling the vertices on merge trees, e.g.~by considering cluster trees, the complexity of computing the interleaving distance can be significantly improved \cite{Eldridge2015a}.
In this paper we show that computing the interleaving distance on labeled merge trees, i.e.~phylogenetic trees, is polynomial in the number of leaves; see Cor.~\ref{cor:cophenetic-iso}.

In the Sec.~\ref{sec:Bkgd}, we discuss the notion of categories with a flow and equivariant functors, the interleaving distance, and how the $\ell^\infty$-norm can be realized as an interleaving distance on posets.
In Sec.~\ref{sec:CombStructures}, we define merge trees, Reeb graphs and define topological $\e$-smoothings which gives rise to the interleaving distance for merge trees.  
There, we also introduce a combinatorial way to represent merge trees as join-semilattices.
We then define phylogenetic trees as labeled merge trees.
Finally, in Sec.~\ref{sec:CopheneteicIsInterleaving}, we show our main result Thm.~\ref{thm:cophenetic-embedding}, where we realize the $\ell^\infty$-cophenetic metric \cite{Cardona2013} as an interleaving metric on phylogenetic trees using the generalized framework of interleavings on categories with a flow \cite{Silva2017,AnastasiosThesis}. 
This also naturally provides a formula for this interleaving distance which can be computed in polynomial time; see Cor.~\ref{cor:cophenetic-iso}.
Finally, in Sec.~\ref{sec:Discussion}, we discuss future directions for research.

\section{Categorical Structures}
\label{sec:Bkgd}

In this section, we give the necessary category theory-related background.
This background assumes a basic understanding of category theory basics; see, e.g., \cite{MacLane1978} for a good introduction.
This section largely follows the set up and terminology of \cite{Silva2017,AnastasiosThesis}.
\subsection{Categories with a Flow}
\label{ssec:CategoryWithFlow}
Let $\CC$ be a category.
Denote by $\End(\CC)$ the endofunctor category $[\CC,\CC]$.
Also denote by $\R_{\geq0}=(\R_{\geq0},\leq)$ the poset of all nonnegative real numbers.
\begin{dfn}
	\label{dfn:CatWith[0,infty]Action}
	A \textbf{category with a flow}\footnote{This is also known as a $[0,\infty)$-actegory, but category with a flow is both easier to say and fails to generate a flurry of questions about assumed typos.} $(\CC,\Ffunc)$ consists of a category $\CC$, together with
	\begin{itemize}
		\item a functor $\Ffunc:\R_{\geq0}\to\End(\CC)$, $\e\mapsto\Ffunc_\e$, called the \textbf{flow}
		\item a natural transformation $u:I_{\CC}\Rightarrow\Ffunc_{0}$, where $I_{\CC}$ is the identity endofunctor of $\CC$, and
		\item a collection of natural transformations $\mu_{\e,\zeta}:\Ffunc_{\e}\Ffunc_{\zeta}\Rightarrow\Ffunc_{\e+\zeta}$, $\e,\zeta\geq0$,  
	\end{itemize}
	such that the diagrams 
	\begin{equation*}
	\begin{tikzcd}
	\& \Ffunc_{\e}\arrow[ld, swap, Rightarrow, "{u I_{\Ffunc_{\e}}}"]\arrow[rd,  equal, ] \& \&
	\& \Ffunc_{\e}\arrow[ld, swap, Rightarrow, "{I_{\Ffunc_{\e}} u}"]\arrow[rd,  equal, ]\\
	\Ffunc_{0}\Ffunc_{\e}  \arrow[rr, Rightarrow, "{\mu_{0,\e}}"] \& \& \Ffunc_{\e} \&
	\Ffunc_{\e}\Ffunc_{0}  \arrow[rr, Rightarrow, "{\mu_{\e,0}}"] \& \& \Ffunc_{\e}
	\\
	\Ffunc_{\e}\Ffunc_{\zeta}\Ffunc_{\delta}  \arrow[rr, Rightarrow, "{I_{\Ffunc_{\e}}\mu_{\zeta,\delta}}"]\arrow[dd, swap, Rightarrow, "{\mu_{\e,\zeta}I_{\Ffunc_{\delta}}}"]\&\&\Ffunc_{\e}\Ffunc_{\zeta+\delta}\arrow[dd, Rightarrow, "{\mu_{\e,\zeta+\delta}}"] \&
	\Ffunc_{\e}\Ffunc_{\zeta}  \arrow[rr, Rightarrow, "{\mu_{\e,\zeta}}"]\arrow[dd, swap, Rightarrow, "{\Ffunc_{(\e\leq \delta)}\Ffunc_{(\zeta\leq \kappa)}}"' near start]\&\& \Ffunc_{\e+\zeta} \arrow[dd, Rightarrow, "{\Ffunc_{(\e+\zeta\leq \delta+\kappa)}}"' near end]
	\\
	\\
	\Ffunc_{\e+\zeta}\Ffunc_{\delta}\arrow[rr, swap, Rightarrow, "{\mu_{\e+\zeta,\delta}}"] \& \& \Ffunc_{\e+\zeta+\delta}  \&
	\Ffunc_{\delta}\Ffunc_{\kappa}\arrow[rr, swap, Rightarrow, "{\mu_{\delta,\kappa}}"'] \& \& \Ffunc_{\delta+\kappa}
	\end{tikzcd}
	\end{equation*}
	commute for every $\e,\zeta,\delta,\kappa\geq0$. 
	The flow is said to be \textbf{strong} (\textbf{strict}) if the coherence natural transformations $\mu_{\e,\zeta}$ and $u$ are isomorphisms (identities). 
\end{dfn}

We often call the endofunctor $\Ffunc_\e$ the \textbf{$\e$-translation} unless we are in a category where we have a more specific name for it. 
Next, we define maps between categories with a flow.

\begin{dfn}
	\label{definition:equivariant}
	A \textbf{colax equivariant functor} $\Hfunc:\CC\to\DD$ between categories with a flow $\CC=(\CC,\Ffunc,u,\mu)$ and $\DD=(\DD,\Gfunc,v,\lambda)$ is an ordinary functor $\Hfunc:\CC\to\DD$ together with a natural transformation $\eta_{\e}:\mathcal{H}\mathcal{T}_{\e}\Rightarrow\Gfunc_{\e}\mathcal{H}$ for each $\e\geq0$
	such that the diagrams
	\begin{equation*}
	\begin{array}{c}
	\begin{tikzcd}
	\& 
	\Hfunc
	\arrow[ld, swap, Rightarrow, "{I_{\Hfunc} u}"]
	\arrow[rd,  Rightarrow, "{v I_{\Hfunc}}"]
	\\
	\Hfunc\Ffunc_{0}  
	\arrow[rr, Rightarrow, "{\eta_{0}}"] 
	\& \& 
	\Gfunc_{0}\Hfunc
	\end{tikzcd}
	\begin{tikzcd}
	\Hfunc\Ffunc_{\e}  
	\arrow[r, Rightarrow, "{\eta_{\e}}"]
	\arrow[d, swap, Rightarrow, "{I_{\Hfunc}\Ffunc_{(\e\leq \zeta)}}"]
	\&
	\Gfunc_{\e}\Hfunc \arrow[d, Rightarrow, "{\Gfunc_{(\e\leq \zeta)} I_{\Hfunc}}"]
	\\
	\Hfunc\Ffunc_{\zeta} \arrow[r,  Rightarrow, "{\eta_{\zeta}}"] 
	\& 
	\Gfunc_{\zeta}\Hfunc 
	\end{tikzcd}
	\\
	\begin{tikzcd}
	\Hfunc\Ffunc_{\e}\Ffunc_{\zeta}  
	\arrow[r, Rightarrow, "{\eta_{\e} I_{\Ffunc_{\zeta}}}"]
	\arrow[d, swap, Rightarrow, "{I_{\Hfunc}\mu_{\e,\zeta}}"]
	\&
	\Gfunc_{\e}\Hfunc\Ffunc_{\zeta}
	\arrow[r, Rightarrow, "{I_{\Gfunc_{\e}}\eta_{\zeta}}"]
	\&
	\Gfunc_{\e}\Gfunc_{\zeta}\Hfunc 
	\arrow[d, Rightarrow, "{\lambda_{\e,\zeta} I_{\Hfunc}}"]
	\\
	\Hfunc\Ffunc_{\e+\zeta}\arrow[rr,  Rightarrow, "{\eta_{\e+\zeta}}"] 
	\&\&
	\Gfunc_{\e+\zeta}\Hfunc
	\end{tikzcd}
	\end{array}
	\end{equation*}
	commute for all $\e,\zeta\geq0$. 
If all $\eta_{\e}$ are natural isomorphisms (identities) then $\Hfunc$ is called a \textbf{strong} (\textbf{strict}) equivariant functor. 
\end{dfn}

The collection of all categories with a flow together with the colax equivariant functors forms a category on its own which we denote by $\textbf{Flow}$. 
\subsection{The Interleaving Distance associated to a Category with a Flow}
We consider the following generalized setting for a proper notion of a distance.
\begin{dfn}Let $X$ be a set or more generally a proper class. 
	A function $d:X\times X\to [0,\infty]$ is said to be an \textbf{extended pseudometric on $X$} if 
	\begin{itemize}
		\item $d(x_1,x_1)=0$ for all $x_1$ in $\X$ and
		\item $d(x_1,x_3)\leq d(x_1,x_2)+d(x_2,x_3)$ for every $x_1,x_2,x_3$ in $X$.
	\end{itemize} 
	In particular $(X,d)$ is called a \textbf{Lawvere metric space}. 
\end{dfn}

Note that this definition both allows for the possibility that the distance between two objects is $\infty$, and the possibility that $d(x_1,x_2) = 0$ even if $x_1 \neq x_2$.

\begin{dfn}
	Define a morphism $f:(X,d_X)\to(Y,d_y)$ of Lawvere metric spaces is said to be a 1-Lipschitz map if
	\begin{equation*}
	d_Y(f(x_1),f(x_2))\leq d_X(x_1,x_2)\text{ for every }x_1,x_2\in X.
	\end{equation*} 
\end{dfn}

The collection of all Lawvere metric spaces together with 1-Lipschitz maps forms a category which we denote by $\mathbf{Law}$.

Let $(\CC,\Ffunc)$ be a category with a flow. 
The flow $\Ffunc$ on $\CC$ enables us to measure `how far' two objects in $\CC$ are from being isomorphic up to a coherence natural transformation.

\begin{dfn}\label{definition:interleavings}
	Let $X,Y$ be two objects in $\CC$. 
	A \textbf{weak $\e$-interleaving of $X$ and $Y$}, denoted $(\varphi,\psi)$, consists of a pair  of morphisms $\varphi:X\to\Ffunc_{\e}Y$ and $\psi:Y\to\Ffunc_{\e}X$ in $\CC$ such that the following pentagons
	\begin{equation}
	\label{eq:interleaving}
	\begin{tikzcd}
	\Ffunc_{0}X
	\arrow[dd, "{\Ffunc_{(0\leq2\e),X}}"']
	\&
	X
	\arrow[l, "{u_{X}}"']
	\&
	Y
	\arrow[r,"{u_{Y}}"]
	\arrow[dl,  "{\psi}"' very near start]
	\&
	\Ffunc_{0}Y
	\arrow[dd, swap, "{\Ffunc_{(0\leq2\e),Y}}"']
	\\
	\&
	\Ffunc_{\e}X
	\&
	\Ffunc_{\e}Y
	\arrow[leftarrow,  ul, crossing over, "{\varphi}"' very near end]
	\arrow[dl,  "{\Ffunc_{\e}\psi}" very near start]
	\\
	\Ffunc_{2\e}X
	\&
	\Ffunc_{\e}\Ffunc_{\e}X
	\arrow[l, "{\mu_{\e,\e,X}}"' ]
	\&
	\Ffunc_{\e}\Ffunc_{\e}Y
	\arrow[leftarrow,  ul, crossing over, "{\Ffunc_{\e}\phi}" very near end]
	\arrow[r, swap, "{\mu_{\e,\e,Y}}"' ]
	\&
	\Ffunc_{2\e}Y
	\end{tikzcd}
	\end{equation}
	commute. 
	We say that $X,Y$ are \textbf{weakly $\e$-interleaved} if there exists a weak $\e$-interleaving $(\varphi,\psi)$ of $X$ and $Y$. 
	The \textbf{(weak) interleaving distance} with respect to $\Ffunc$ for a pair of objects $X,Y$ in $\CC$ is defined to be
	$$d_{(\CC,\Ffunc)}(X,Y)=\inf\{\e\geq0 \mid X,Y\text{ are weakly }\e\text{-interleaved}\}.$$
	If $X$ and $Y$ are not weakly interleaved for any $\e$, we set $d_{(\CC,\Ffunc)}(X,Y) = \infty$.
\end{dfn}
The ``weak'' moniker is meant to differentiate this definition of interleavings from the traditional persistent homology definitions \cite{Chazal2016}.
There, the fact that the category of persistence modules is a strict 
 category with a flow means that the pentagons of Diagram \ref{eq:interleaving} collapse down to triangles.
In this paper, however, we will drop the word ``weak'' and just reference them as interleaving distances.

As studied and proved in \cite{Silva2017,AnastasiosThesis}, we have the following theorem.

\begin{thm}[{\cite[Thm.~5.3]{Silva2017}}]
	\label{thm:Functoriality}
	Given a category with a flow $(\CC,\Ffunc)$, the associated interleaving distance $d_{(\CC,\Ffunc)}$ forms an extended pseudometric on the objects of $\CC$. 
	Furthermore, the assignment $\mathcal{I}:\textbf{Flow}\to\textbf{Law}$, $(\CC,\Ffunc)\mapsto(\Ob(\CC), d_{(\CC,\Ffunc)})$ is by itself functorial, i.e.
	\begin{align*}
	\textbf{Flow} &\xrightarrow{\hspace{1em}\mathcal{I}\hspace{1em}} \textbf{Law}\\
	\text{Categories with a flow} &\xmapsto{\hspace{2.2em}} \text{Lawvere metric spaces}\\
	\text{Colax equivariant functors} &\xmapsto{\hspace{2.2em}} \text{1-Lipschitz maps}
	\end{align*}
\end{thm}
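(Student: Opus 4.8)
The plan is to verify the two extended-pseudometric axioms directly from the coherence data of the flow, and then to check functoriality by showing that interleavings can be composed and transported. Throughout I treat an $\e$-interleaving as the basic datum and manipulate it via the structure maps of $\Ffunc$.

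For reflexivity, $d_{(\CC,\Ffunc)}(X,X)=0$, I would exhibit a canonical $\e$-self-interleaving of $X$ for every $\e\geq0$. Define the structure map $\sigma^X_\e:X\to\Ffunc_\e X$ as the composite $X\xrightarrow{u_X}\Ffunc_0 X\xrightarrow{\Ffunc_{(0\leq\e)}X}\Ffunc_\e X$ and set $\varphi=\psi=\sigma^X_\e$. With $Y=X$ both pentagons of Diagram~\eqref{eq:interleaving} become the single identity $\mu_{\e,\e,X}\circ\Ffunc_\e\sigma^X_\e\circ\sigma^X_\e=\Ffunc_{(0\leq2\e),X}\circ u_X$, which is precisely the statement that the structure maps are compatible with $\mu$; this reduces to the unit and associativity coherence of Definition~\ref{dfn:CatWith[0,infty]Action} together with the functoriality of $\Ffunc$ on $(\R_{\geq0},\leq)$. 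Since $X$ is $\e$-interleaved with itself for all $\e\geq0$, the defining infimum is $0$.

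For the triangle inequality I would show interleavings compose additively. Given an $\e$-interleaving $(\varphi,\psi)$ of $X,Y$ and a $\zeta$-interleaving $(\varphi',\psi')$ of $Y,Z$, I would telescope to an $(\e+\zeta)$-interleaving $(\Phi,\Psi)$ of $X,Z$ by
$$\Phi=\bigl(X\xrightarrow{\varphi}\Ffunc_\e Y\xrightarrow{\Ffunc_\e\varphi'}\Ffunc_\e\Ffunc_\zeta Z\xrightarrow{\mu_{\e,\zeta,Z}}\Ffunc_{\e+\zeta}Z\bigr)$$
and symmetrically
$$\Psi=\bigl(Z\xrightarrow{\psi'}\Ffunc_\zeta Y\xrightarrow{\Ffunc_\zeta\psi}\Ffunc_\zeta\Ffunc_\e X\xrightarrow{\mu_{\zeta,\e,X}}\Ffunc_{\e+\zeta}X\bigr).$$
Verifying the pentagons for $(\Phi,\Psi)$ is a paste of the two given pentagons using naturality of $\mu$ and of $\Ffunc_{(\cdot\leq\cdot)}$, functoriality of each $\Ffunc_\e$, and crucially the associativity square for $\mu$ in Definition~\ref{dfn:CatWith[0,infty]Action}, which reconciles the two routes into $\Ffunc_{2(\e+\zeta)}$. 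Granting this, $d_{(\CC,\Ffunc)}(X,Z)\leq\e+\zeta$ whenever $X,Y$ are $\e$-interleaved and $Y,Z$ are $\zeta$-interleaved, and taking infima gives $d_{(\CC,\Ffunc)}(X,Z)\leq d_{(\CC,\Ffunc)}(X,Y)+d_{(\CC,\Ffunc)}(Y,Z)$.

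For functoriality of $\mathcal{I}$, the key point is that a colax equivariant functor transports interleavings without increasing the parameter. Given $\Hfunc:\CC\to\DD$ with coherence $\eta$ and an $\e$-interleaving $(\varphi,\psi)$ of $X,Y$, I would push forward by $\overline\varphi=\eta_{\e,Y}\circ\Hfunc\varphi:\Hfunc X\to\Gfunc_\e\Hfunc Y$ and $\overline\psi=\eta_{\e,X}\circ\Hfunc\psi:\Hfunc Y\to\Gfunc_\e\Hfunc X$. Applying $\Hfunc$ to the pentagons and inserting the three coherence diagrams of Definition~\ref{definition:equivariant} (the unit triangle and the compatibilities of $\eta$ with $\Ffunc_{(\e\leq\zeta)}/\Gfunc_{(\e\leq\zeta)}$ and with $\mu,\lambda$) shows $(\overline\varphi,\overline\psi)$ is an $\e$-interleaving of $\Hfunc X,\Hfunc Y$; hence $d_\DD(\Hfunc X,\Hfunc Y)\leq d_\CC(X,Y)$, i.e.\ $\mathcal{I}(\Hfunc)=\Hfunc|_{\Ob}$ is $1$-Lipschitz. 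Since $\mathcal{I}$ acts as the underlying object-map, preservation of identities and of composition is then immediate. The hard part will be the two pentagon verifications, above all the telescoped one for the triangle inequality: because the flow is only strong rather than strict, $\mu$ and $u$ are genuine natural transformations, so the pentagons of Diagram~\eqref{eq:interleaving} do not collapse to triangles and every step must invoke the associativity and unit coherence at the correct objects. The real bookkeeping is tracking the naturality squares for $\mu$ as morphisms are pushed through the translations $\Ffunc_\e$ and confirming the two composites into $\Ffunc_{2(\e+\zeta)}$ agree; once the relevant cells are assembled into a single commuting pasting diagram, the remaining argument is formal.
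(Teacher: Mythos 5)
The paper itself offers no proof of this theorem: it is quoted directly from \cite[Thm.~5.3]{Silva2017} (``As studied and proved in\ldots''), so the only comparison available is with the standard argument in that reference, which is essentially what you have reconstructed. Your three constructions are the correct ones --- the maps $\sigma^X_\e = \Ffunc_{(0\leq\e),X}\circ u_X$ give $d(X,X)=0$, the $\mu$-composite of an $\e$- and a $\zeta$-interleaving gives the triangle inequality, and post-composing $\Hfunc\varphi,\Hfunc\psi$ with the colax structure maps $\eta_\e$ gives $1$-Lipschitzness --- and the deferred pentagon chases do close up using exactly the cells you name (naturality of $u$, $\mu$, and $\Ffunc_{(\cdot\leq\cdot)}$, the unit and associativity coherences, and the square relating $\mu$ to the monotonicity morphisms), with symmetry not an issue since the paper's Lawvere-style definition of extended pseudometric does not require it.
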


This theorem is particularly useful due to the following corollary, which says that we need only find a fully faithful colax equivariant functor between categories with a flow in order to obtain an isometric embedding.

\begin{cor}
	\label{cor:Isometric Embedding}
	If $\Hfunc$ is a fully faithful colax equivariant functor $\Hfunc:\CC\to\DD$ between categories with a flow $\CC=(\CC,\Ffunc,u,\mu)$ and $\DD=(\DD,\Gfunc,v,\lambda)$ then it is an isometric embedding with respect to the interleaving distances.
\end{cor}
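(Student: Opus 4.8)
The plan is to unfold the conclusion ``$\Hfunc$ is an isometric embedding'' into the two inequalities
$$
d_{(\DD,\Gfunc)}(\Hfunc X,\Hfunc Y)\ \leq\ d_{(\CC,\Ffunc)}(X,Y)
\qquad\text{and}\qquad
d_{(\CC,\Ffunc)}(X,Y)\ \leq\ d_{(\DD,\Gfunc)}(\Hfunc X,\Hfunc Y),
$$
to be proved for every pair of objects $X,Y$ of $\CC$. Their conjunction says that $\Hfunc$ preserves interleaving distance exactly, which is what ``isometric embedding with respect to the interleaving distances'' means. I would treat the two inequalities separately, since they use the two hypotheses (colax equivariance and full faithfulness) in genuinely different ways.

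The first inequality requires no new work. Since $\Hfunc$ is a colax equivariant functor, it is a morphism in $\textbf{Flow}$, so Thm.~\ref{thm:Functoriality} applies and $\mathcal{I}(\Hfunc)$ is a $1$-Lipschitz map $(\Ob(\CC),d_{(\CC,\Ffunc)})\to(\Ob(\DD),d_{(\DD,\Gfunc)})$. Spelling out the $1$-Lipschitz condition on the pair $(X,Y)$ is precisely the first inequality. In other words, functoriality already encodes that a colax equivariant functor \emph{pushes forward} interleavings: an $\e$-interleaving of $X,Y$ in $\CC$ maps, via $\Hfunc$ and the structure maps $\eta_\e$, to an $\e$-interleaving of $\Hfunc X,\Hfunc Y$ in $\DD$.

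The content is therefore the reverse inequality, where full faithfulness must be used to show that $\Hfunc$ \emph{reflects} interleavings. Concretely, I would start from a weak $\e$-interleaving $(\varphi,\psi)$ of $\Hfunc X$ and $\Hfunc Y$ in $\DD$, with $\varphi:\Hfunc X\to\Gfunc_\e\Hfunc Y$ and $\psi:\Hfunc Y\to\Gfunc_\e\Hfunc X$, and aim to manufacture a weak $\e$-interleaving of $X$ and $Y$ in $\CC$; producing one for every $\e$ exceeding $d_{(\DD,\Gfunc)}(\Hfunc X,\Hfunc Y)$ and taking the infimum yields the inequality. Full faithfulness gives the reflection mechanism: the bijections $\CC(X,\Ffunc_\e Y)\xrightarrow{\sim}\DD(\Hfunc X,\Hfunc\Ffunc_\e Y)$ and $\CC(Y,\Ffunc_\e X)\xrightarrow{\sim}\DD(\Hfunc Y,\Hfunc\Ffunc_\e X)$ let me pull back any $\DD$-morphism \emph{that lands in the image of $\Hfunc$} to a unique $\CC$-morphism. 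The step that has to be carried out with care is that $\varphi$ and $\psi$ land in $\Gfunc_\e\Hfunc(-)$ rather than in $\Hfunc\Ffunc_\e(-)$, so before reflecting one must transport them across the colax structure maps $\eta_\e:\Hfunc\Ffunc_\e\Rightarrow\Gfunc_\e\Hfunc$ to recognize them as images $\Hfunc(\tilde\varphi)$, $\Hfunc(\tilde\psi)$ of candidate $\CC$-morphisms.

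I expect this transport across $\eta_\e$, together with the verification of the interleaving pentagons for the reflected pair $(\tilde\varphi,\tilde\psi)$, to be the main obstacle, and the place where the interaction between the colax structure and reflection must be pinned down precisely. The verification itself would exploit that a faithful functor reflects commutativity: a diagram commutes in $\CC$ as soon as its $\Hfunc$-image commutes in $\DD$. So the plan is to apply $\Hfunc$ to each of the reflected pentagons of Diagram~\ref{eq:interleaving} in $\CC$, expand the result using the three coherence diagrams for $\eta_\e$ from Def.~\ref{definition:equivariant} together with the naturality of $u,v,\mu,\lambda$, and reduce everything to the pentagons that $(\varphi,\psi)$ already satisfy in $\DD$. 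The delicate bookkeeping is ensuring that the $\eta_\e$'s and the composition constraints $\mu$, $\lambda$ assemble so that these expanded $\DD$-diagrams collapse exactly onto the given interleaving of $\Hfunc X$ and $\Hfunc Y$; this is the step I would write out most carefully.
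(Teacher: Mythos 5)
Your first inequality is fine: by Thm.~\ref{thm:Functoriality}, a colax equivariant functor is $1$-Lipschitz, which is exactly $d_{(\DD,\Gfunc)}(\Hfunc X,\Hfunc Y)\leq d_{(\CC,\Ffunc)}(X,Y)$. The gap is in the reverse direction, and it sits precisely at the step you flag as delicate: the colax structure maps point the \emph{wrong way} for the transport you propose. You have $\eta_{\e,Y}:\Hfunc\Ffunc_{\e}Y\to\Gfunc_{\e}\Hfunc Y$, so a morphism $\varphi:\Hfunc X\to\Gfunc_{\e}\Hfunc Y$ cannot in general be recognized as $\eta_{\e,Y}\circ\Hfunc(\tilde\varphi)$; producing $\tilde\varphi$ would require inverting $\eta_{\e,Y}$, which a colax structure does not provide. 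Moreover, this is not a bookkeeping issue that more care can fix: in the stated colax generality the corollary itself is false, so no completion of your plan exists. Concretely, let $\CC$ be the discrete category on two objects $X,Y$ with the identity flow $\Ffunc_{\e}=I_{\CC}$, so that $d_{(\CC,\Ffunc)}(X,Y)=\infty$; let $\DD$ have objects $X,Y,Z$ with exactly two non-identity morphisms $x:X\to Z$ and $y:Y\to Z$, with flow $\Gfunc_{0}=I_{\DD}$ and $\Gfunc_{\e}$ the constant endofunctor at $Z$ for $\e>0$, the transitions $\Gfunc_{(0\leq\e)}$ having components $x,y,\id_{Z}$ and all other coherence data identities. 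The inclusion $\Hfunc:\CC\to\DD$ is fully faithful and colax equivariant (take $\eta_{0}=\id$ and $\eta_{\e}$ with components $x,y$ for $\e>0$), yet $(x,y)$ is a weak $\e$-interleaving of $\Hfunc X,\Hfunc Y$ for every $\e>0$: both composites in the left pentagon equal $x$, and both composites in the right pentagon equal $y$, since $\Gfunc_{\e}$ sends every morphism to $\id_{Z}$. Hence $d_{(\DD,\Gfunc)}(\Hfunc X,\Hfunc Y)=0\neq\infty$. Full faithfulness together with colax equivariance genuinely does not reflect interleavings.

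For comparison, the paper proves the corollary by a different route: corestrict to the full image subcategory $\mathbf{Im}\,\Hfunc$, observe that full faithfulness makes $\Hfunc:\CC\to\mathbf{Im}\,\Hfunc$ an equivalence, and apply Thm.~\ref{thm:Functoriality} to both $\Hfunc$ and its inverse. But that argument conceals the same problem your transport step exposes: $\mathbf{Im}\,\Hfunc$ need not be closed under $\Gfunc_{\e}$ (in the example above $\Gfunc_{\e}X=Z\notin\mathbf{Im}\,\Hfunc$), so it need not inherit a flow at all, and even when it does, the inverse equivalence carries a colax structure only if the $\eta_{\e}$ are invertible. So the statement, the paper's proof, and your plan all really live at the level of \emph{strong} (in particular strict) equivariant functors, which is the only case the paper uses: the cophenetic map of Thm.~\ref{thm:cophenetic-embedding} is strict. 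If you add that hypothesis, your argument closes cleanly and is, if anything, more self-contained than the paper's: $\eta_{\e,Y}^{-1}\circ\varphi$ and $\eta_{\e,X}^{-1}\circ\psi$ lie in the images of the hom-bijections, giving $\tilde\varphi,\tilde\psi$; and since a strict equivariant $\Hfunc$ carries the flow data $u,\mu,\Ffunc_{(\e\leq\zeta)}$ of $\CC$ exactly onto $v,\lambda,\Gfunc_{(\e\leq\zeta)}$ on the image, applying $\Hfunc$ to the two pentagons for $(\tilde\varphi,\tilde\psi)$ yields exactly the two pentagons for $(\varphi,\psi)$, so faithfulness reflects their commutativity and gives the reverse inequality.
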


\begin{proof}
	Consider the image category $\mathbf{Im}\, \Hfunc$, namely the full subcategory of $\DD$ whose objects are images $\Hfunc(X)$ of objects $X$ in $\CC$. 
	Then because $\Hfunc$ is fully faithful the colax equivariant functor $\Hfunc:\CC\to\mathbf{Im}\, \Hfunc$, $a\mapsto \Hfunc(a)$ is an equivalence of categories. 
	So by applying Thm.~\ref{thm:Functoriality} to both $\Hfunc$ and its inverse functor $\Hfunc^{-1}$, we obtain that $\Hfunc:(\Ob(\CC),d_{(\CC,\Ffunc)})\to(\Ob(\mathbf{Im}\, \Hfunc),d_{(\DD,\mathbf{S})})$ is an interleaving isometry.
	In other words $\Hfunc:(\Ob(\CC),d_{(\CC,\Ffunc)})\to(\Ob(\DD),d_{(\DD,\mathbf{S})})$ is an isometric embedding with respect to the interleaving distances.
\end{proof}

\subsection{Interleaving Distances on Posets with a Flow}
\label{ssec:Posets}
In the special case where our category is actually a poset, the interleaving distance becomes much easier to understand.
\begin{dfn}
	A category $\PP$ is said to be a \textbf{poset} if for every $X,Y$ in $\PP$ there exists at most one morphism $f$ from $X$ to $Y$; i.e. the  Hom-set $\Hom_{\PP}(X,Y)$ is either a singleton or the empty set.
\end{dfn}
Let $(\PP,\Omega)$ be a poset with a flow,
and let $d_{(\PP,\Omega)}$ be the interleaving distance on $\PP$ induced by $\Omega$ (Defn.~\ref{definition:interleavings}).
The extra structure of the poset category makes characterizing the interleaving distance rather simple. 
Given two objects $X,Y$ in $\PP$ a pair $(\varphi,\psi)$ of morphisms $\varphi:X\to \Omega_\e Y$ and $\psi:Y\to\Omega_\e X$ is automatically an $\e$-interleaving of $X,Y$ because there exists at most one morphism from $X$ to $\Omega_{2\e} X$ (and at most one from $Y$ to $\Omega_{2\e} Y$  respectively).
So, the interleaving distance on $\PP$ induced by $\Omega$ is given by 
\begin{equation*}
d_{\PP}(X,Y)=\inf\{\e\geq0\mid \exists\; \varphi:X\to\Omega_{\e}Y\text{ and }\psi:Y\to\Omega_{\e}X\}.
\end{equation*}
Posets also satisfy the following interesting property.
\begin{prop} 
	\label{Prop:Poset}
	Let $((\PP,\leq_\PP),\Ffunc)$ and $((\QQ,\leq_\QQ),\Gfunc)$ be posets with a flow. 
	If we have a function $\Hfunc:\Ob(\PP)\to\Ob(\QQ)$, $X\mapsto\Hfunc(X)$, also for every pair of objects $X,Y$ in $\PP$ a function $\Hfunc_{X,Y}:\Hom_\PP(X,Y)\to\Hom_\QQ(\Hfunc(X),\Hfunc(Y))$, $f\mapsto \Hfunc[f]$ and for each $X$ in $\PP$ an inequality $\Hfunc\Ffunc_\e(X)\leq_\QQ \Gfunc_\e\Hfunc(X)$ in $\QQ$, then $\Hfunc$ forms a fully faithful colax equivariant functor. 
\end{prop}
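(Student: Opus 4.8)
The plan is to verify the three defining requirements in turn---that $\Hfunc$ is a genuine functor, that it is fully faithful, and that it carries a colax equivariant structure in the sense of Defn.~\ref{definition:equivariant}---while exploiting throughout the fact that both $\PP$ and $\QQ$ are thin, i.e.\ every hom-set contains at most one morphism. Thinness is what makes the argument almost entirely formal: any two parallel morphisms in $\QQ$ coincide, so \emph{every} diagram of morphisms in $\QQ$ commutes automatically, and every function into a hom-set of size at most one is automatically injective. Keeping this observation in reserve, most of the functor and coherence axioms will reduce to tautologies.

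First I would establish functoriality. From the object assignment $X\mapsto\Hfunc(X)$ and the hom-assignments $\Hfunc_{X,Y}$, the unit law and composition law are equalities of morphisms in $\QQ$: both $\Hfunc[\mathrm{id}_X]$ and $\mathrm{id}_{\Hfunc(X)}$ lie in $\Hom_\QQ(\Hfunc(X),\Hfunc(X))$, and both $\Hfunc[g\circ f]$ and $\Hfunc[g]\circ\Hfunc[f]$ lie in $\Hom_\QQ(\Hfunc(X),\Hfunc(Z))$. Since these hom-sets are singletons or empty, the required equalities hold on the nose, so $\Hfunc$ is a bona fide functor. Note also that the mere existence of $\Hfunc_{X,Y}$ whenever $\Hom_\PP(X,Y)\neq\emptyset$ forces $\Hom_\QQ(\Hfunc(X),\Hfunc(Y))\neq\emptyset$; that is, the hom-maps record the monotonicity $X\leq_\PP Y\Rightarrow\Hfunc(X)\leq_\QQ\Hfunc(Y)$.

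Next, full faithfulness. Faithfulness is immediate, since $\Hom_\PP(X,Y)$ has at most one element and hence $\Hfunc_{X,Y}$ is injective. For fullness I must check that each $\Hfunc_{X,Y}$ is also surjective, equivalently that $\Hfunc(X)\leq_\QQ\Hfunc(Y)$ forces $X\leq_\PP Y$; together with the monotonicity just noted, this says $\Hfunc$ reflects the order and each $\Hfunc_{X,Y}$ is a bijection of subsingletons. This is the single place where content beyond pure thinness enters, and it is the step I expect to be the main obstacle: the order-preserving, injective data supplied by the hom-maps must in fact assemble into an order embedding, so that no spurious inequality $\Hfunc(X)\leq_\QQ\Hfunc(Y)$ appears in the image without a witnessing inequality already present in $\PP$.

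Finally, I would construct the colax equivariant structure. For each $\e\geq 0$ I define the natural transformation $\eta_\e:\Hfunc\Ffunc_\e\Rightarrow\Gfunc_\e\Hfunc$ componentwise, taking $\eta_{\e,X}$ to be the unique morphism $\Hfunc\Ffunc_\e(X)\to\Gfunc_\e\Hfunc(X)$ furnished by the hypothesized inequality $\Hfunc\Ffunc_\e(X)\leq_\QQ\Gfunc_\e\Hfunc(X)$. Naturality of each $\eta_\e$ (a square in $\QQ$) and the three coherence diagrams of Defn.~\ref{definition:equivariant}---the unit triangle involving $u$ and $v$, the square built from $\Ffunc_{(\e\leq\zeta)}$ and $\Gfunc_{(\e\leq\zeta)}$, and the pentagon built from $\mu$ and $\lambda$---are all diagrams of morphisms in the thin category $\QQ$, and therefore commute automatically. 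This exhibits $\eta$ as a colax equivariant structure on $\Hfunc$, and combined with the fully faithful verification above completes the proof.
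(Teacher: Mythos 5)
Everything you actually prove is correct, and it is the same argument the paper gives (in compressed form): thinness of $\QQ$ makes the unit law, the composition law, naturality of each $\eta_\e$, and all three coherence diagrams of Defn.~\ref{definition:equivariant} commute automatically, while thinness of $\PP$ makes each $\Hfunc_{X,Y}$ injective, hence $\Hfunc$ faithful. Your construction of $\eta_{\e}$ from the hypothesized inequalities is exactly the intended colax structure.

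The obstacle you flag at fullness, however, is not merely the hard step --- it is unfillable, because the proposition as stated is false. Take $\PP$ to be the two-element antichain $\{a,b\}$ with the identity flow, $\QQ=(\R,\leq)$ with the shift flow $\Omega_\e(r)=r+\e$, and set $\Hfunc(a)=0$, $\Hfunc(b)=1$. All hypotheses hold: the hom-functions exist (there is nothing to define beyond identities, and a function out of an empty hom-set always exists), and $\Hfunc\Ffunc_\e(X)=\Hfunc(X)\leq \Hfunc(X)+\e=\Omega_\e\Hfunc(X)$. Yet $\Hom_\PP(a,b)=\emptyset$ while $\Hom_\QQ(0,1)\neq\emptyset$, so $\Hfunc$ is not full; correspondingly $d_\PP(a,b)=\infty$ while $d_\QQ(0,1)=1$, so the conclusion of Cor.~\ref{cor:Isometric Embedding} fails as well. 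The paper's own proof commits exactly the error you refused to commit: it asserts that ``the function $\Hfunc_{X,Y}$ is bijective'' because both categories are posets, but a map of subsingletons is only automatically injective; surjectivity fails precisely when $\Hom_\PP(X,Y)$ is empty and $\Hom_\QQ(\Hfunc(X),\Hfunc(Y))$ is not. The proposition needs the extra hypothesis that $\Hfunc$ reflects the order (each $\Hfunc_{X,Y}$ surjective) --- your ``order embedding'' condition. This matters downstream: fullness of $\Cfunc$ is what gives $d_{\mathbf{PhTree}_n}\leq\|\cdot\|_\infty$ in Cor.~\ref{cor:cophenetic-iso}, and it must be verified directly for the cophenetic map, e.g.\ by checking that $\Cfunc(\X,f,\ell)\geq\Cfunc(\Y,g,\mu)$ coordinatewise allows one to build a function- and label-preserving map sending $x\in\X$ to the unique point of $\Y$ at height $f(x)$ above $\bigvee\{\mu(i):\ell(i)\leq_f x\}$. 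So your proof is as complete as the stated hypotheses allow; the remaining gap belongs to the proposition, not to you.
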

\begin{proof} Because of the poset structure of both categories $\PP$ and $\QQ$, the function $\Hfunc_{X,Y}$ is bijective and the morphism  $\Hfunc\Ffunc_\e(X)\leq_\QQ\Gfunc_\e\Hfunc(X)$ makes $\Hfunc$ a colax equivariant functor.
\end{proof}
We will make use of this proposition in the setting on phylogenetic trees to show the existence of an isometric embedding.


\subsection{The $\ell^\infty$-distance on $\R^n$ is an interleaving distance}
Let $\R^{n}$ be the set of all $n$-tuples of real numbers. 
The $\ell^{\infty}$-norm on $\R^n$ is defined as follows. 
Let $a=(a_1,\ldots,a_n)$ and $b=(b_1,\ldots,b_n)$ be two $n$-tuples in $\R^n$. 
Then define 
\begin{equation*}
\|a-b\|_{\infty}=\max\{|a_i-b_i| : i=1,\ldots,n\}
\end{equation*}
This metric also can be realized as an interleaving distance. 
Consider $\R^n$ as the poset $(\R^{n},\leq)$ where  $a \leq b$ when $a_i\leq b_i$ for all $i=1,\ldots,n$. 

Let $\e\geq0$ and, for ease of notation, let $a + \e = (a_1+\e, \cdots, a_n + \e)$.
Define the $\e$-translation $\Omega_{\e}:(\R^{n},\leq)\to(\R^{n},\leq)$ given by the $\e$-shift upward $a\mapsto a+\e$.
We easily check that $\Omega$ forms a strict flow on $(\R^{n},\leq)$. 
Denote the associated interleaving distance by $d_{(\R^{n},\leq)}$.
Then we have the following result.
\begin{prop}[{\cite[Thm.~3.9]{Silva2017}}]
	The interleaving distance on $\R^{n}$ induced by the strict flow $\Omega$, coincides with the $\ell^{\infty}$-distance on $\R^{n}$.
	That is, for any $a,b \in \R^n$, 
	\begin{equation*}
	d_{(\R^{n},\leq)} (a,b) = \|a-b\|_\infty.
	\end{equation*}
\end{prop}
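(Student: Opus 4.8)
The plan is to exploit the poset simplification of the interleaving distance established immediately above, which collapses the entire problem to a purely order-theoretic computation. Since $(\R^n,\leq)$ is a poset, the pentagon conditions of Definition~\ref{definition:interleavings} are automatic, so the interleaving distance reduces to
\[
d_{(\R^n,\leq)}(a,b)=\inf\{\e\geq0 \mid \exists\,\varphi:a\to\Omega_\e b \text{ and } \psi:b\to\Omega_\e a\}.
\]
In a poset the Hom-set $\Hom(X,Y)$ is nonempty exactly when $X\leq Y$, so a morphism $\varphi:a\to\Omega_\e b$ exists if and only if $a\leq b+\e$, and a morphism $\psi:b\to\Omega_\e a$ exists if and only if $b\leq a+\e$.

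First I would unpack these two inequalities coordinatewise. The relation $a\leq b+\e$ means $a_i\leq b_i+\e$, i.e.\ $a_i-b_i\leq\e$, for every $i$; symmetrically $b\leq a+\e$ means $b_i-a_i\leq\e$ for every $i$. Taken together, a pair $(\varphi,\psi)$ constitutes an $\e$-interleaving of $a$ and $b$ if and only if $|a_i-b_i|\leq\e$ for all $i=1,\ldots,n$, which is precisely the statement that $\max_i|a_i-b_i|=\|a-b\|_\infty\leq\e$.

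Consequently the set of admissible $\e$ is exactly the closed ray $[\|a-b\|_\infty,\infty)$, whose infimum is $\|a-b\|_\infty$; indeed the infimum is attained, since $\e=\|a-b\|_\infty$ already yields a valid interleaving. This gives the claimed identity $d_{(\R^n,\leq)}(a,b)=\|a-b\|_\infty$.

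As for obstacles, there is essentially no hard analytic step: the whole argument is a translation of ``a morphism exists'' into a coordinatewise inequality, and the coherence that would normally require verification is free in the poset setting. The only point meriting a modicum of care is the preliminary fact, used implicitly, that $\Omega$ is a genuine strict flow on $(\R^n,\leq)$ --- that $\Omega_0=I$, that $\Omega_\e\Omega_\zeta=\Omega_{\e+\zeta}$, and that each $\Omega_\e$ is order-preserving --- all of which are immediate from the definition $\Omega_\e(a)=a+\e$.
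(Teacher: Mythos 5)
Your proof is correct. The paper itself gives no proof of this proposition---it simply cites \cite[Thm.~3.9]{Silva2017}---but your argument is exactly the natural one suggested by the machinery the paper develops immediately beforehand: the poset reduction of Defn.~\ref{definition:interleavings} collapses the pentagons, so an $\e$-interleaving is just a pair of inequalities $a\leq b+\e$ and $b\leq a+\e$, which coordinatewise is $\|a-b\|_\infty\leq\e$, and taking the infimum (which is attained) gives the claim. Your closing remark that one should also verify $\Omega$ is a strict flow is well placed, and those checks are indeed immediate.
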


\begin{obs}
	\label{obs:geq}
	Note that alternatively if $\R^n$ has the poset structure $(\R^n,\geq)$, then we can consider the flow $\Omega=(\Omega_{\e})_{\e\geq0}$ on $\R^n$ given by the $\e$-shift downward $(a_1,\ldots,a_n)\mapsto (a_1-\e,\ldots,a_n-\e)$ to obtain again the $\ell^\infty$-distance.
\end{obs}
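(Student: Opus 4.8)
The plan is to reduce the observation to the preceding proposition for $(\R^n,\leq)$ by an order-reversing change of coordinates, and then to confirm the reduction by a direct reading of the interleaving inequalities. First I would verify that $\Omega$ genuinely is a strict flow on the poset $(\R^n,\geq)$, where by convention a morphism $x\to y$ exists exactly when $x\geq y$, i.e.\ $x_i\geq y_i$ for all $i$. The downward shift $\Omega_\e(a)=a-\e$ is monotone for this order (if $a\geq b$ then $a-\e\geq b-\e$), so each $\Omega_\e$ is an endofunctor; the unit is the identity since $\Omega_0=I$; the monotonicity transformation $\Omega_{(\e\leq\zeta)}$ assigns to each $a$ the morphism $a-\e\to a-\zeta$, which exists because $\e\leq\zeta$ forces $a-\e\geq a-\zeta$; and $\Omega_\e\Omega_\zeta(a)=a-\e-\zeta=\Omega_{\e+\zeta}(a)$, so the coherence maps $\mu$ are identities. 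Hence $(\R^n,\geq,\Omega)$ is a poset with a strict flow and Defn.~\ref{definition:interleavings} applies.

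The conceptual heart of the argument is that negation $\nu:(\R^n,\geq)\to(\R^n,\leq)$, $a\mapsto -a$, is an isomorphism of posets, since $a\geq b \iff -a\leq -b$, and it intertwines the two flows: writing $\Omega'_\e(c)=c+\e$ for the upward shift on $(\R^n,\leq)$, we have $\nu(\Omega_\e a)=-(a-\e)=(-a)+\e=\Omega'_\e(\nu(a))$. Thus $\nu$ is a strict equivariant isomorphism of categories with a flow, in particular fully faithful, so by Thm.~\ref{thm:Functoriality} (applied to $\nu$ and $\nu^{-1}$), equivalently by Cor.~\ref{cor:Isometric Embedding}, $\nu$ is an isometry of the associated interleaving distances. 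Since $\|(-a)-(-b)\|_\infty=\|a-b\|_\infty$ and the $(\R^n,\leq)$ interleaving distance equals $\|\cdot-\cdot\|_\infty$ by the preceding proposition, we conclude $d_{(\R^n,\geq)}(a,b)=d_{(\R^n,\leq)}(-a,-b)=\|a-b\|_\infty$.

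As a self-contained confirmation I would also compute the distance directly from the poset characterization $d(a,b)=\inf\{\e\geq0 \mid \exists\,\varphi:a\to\Omega_\e b,\ \psi:b\to\Omega_\e a\}$. The morphism $\varphi:a\to b-\e$ exists iff $a\geq b-\e$, i.e.\ $b_i-a_i\leq\e$ for all $i$; symmetrically $\psi:b\to a-\e$ exists iff $a_i-b_i\leq\e$ for all $i$. Both hold simultaneously iff $|a_i-b_i|\leq\e$ for every $i$, i.e.\ iff $\e\geq\|a-b\|_\infty$, so the infimum is exactly $\|a-b\|_\infty$, matching the reduction above.

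The only genuine subtlety, and the step I would be most careful about, is keeping the direction conventions straight: in $(\R^n,\geq)$ the arrows run from larger tuples to smaller ones and the flow runs downward, so both the order and the shift are reversed relative to the original example. The content of the observation is precisely that reversing both at once leaves the pair of interleaving inequalities $b_i-a_i\leq\e$ and $a_i-b_i\leq\e$ unchanged, which is exactly what the negation isomorphism makes transparent; there is no hard analytic obstacle beyond this bookkeeping.
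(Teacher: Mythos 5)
Your proposal is correct. The paper treats this observation as immediate and gives no proof, implicitly relying on the same reasoning as the preceding proposition for $(\R^n,\leq)$; your direct computation via the poset characterization $d(a,b)=\inf\{\e\geq0 \mid \exists\,\varphi:a\to\Omega_\e b,\ \psi:b\to\Omega_\e a\}$ is exactly that implicit argument, and your negation-isomorphism reduction is a clean, equally valid alternative that makes the symmetry explicit.
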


\section{Combinatorial Structures}
\label{sec:CombStructures}
In this section, we describe the combinatorial objects of study, in particular merge trees and phylogenetic trees.
\subsection{Merge Trees}
First we define Reeb graphs.
As a first step, we consider  the category $\Rtop$ of \textbf{$\R$-spaces} as defined in \cite{deSilva2016}; 
these are topological spaces $\X$ together with a real valued function $f:\X\to\R$, denoted by $(\X,f)$.
A morphism $\varphi:(\X,f)\to(\Y,g)$ of $\R$-spaces is  a continuous map $\varphi:\X\to\Y$ such that $\varphi\circ g=f$.

\begin{dfn}
	\label{dfn:ReebGraph}
	An $\R$-space $(\mathbb{X},f)$ is said to be a \textbf{Reeb graph} if it is isomorphic to an $\R$-space $(\X,f)$ constructed in the following way.
	Let $S = \{ a_1 < \cdots < a_n\} \subset \R$ be given, called a \textbf{critical set}.  
	\begin{itemize}
		\item  For $i=1,\ldots,n$, we specify a finite set of vertices $\V_i$, which lie over $a_i$
		\item For $i=1,\ldots,n-1$, we specify a finite set of edges $\E_i$ which lie over $[a_i, a_{i+1}]$.
		\item For $i = 0$ and $i = n$, we specify two finite set of edges (possibly empty) $\E_0$ and $\E_n$ lying over $(-\infty, a_1]$ and $[a_n,\infty)$ respectively.
		\item For $i=1,\ldots,n$, we specify left attaching maps $\ell_i : \E_i \to \V_i$ 
		\item For $i = 0, \cdots, n-1$, we specify right attaching maps $r_i:\E_i\to \V_{i+1}$.
	\end{itemize} 
	The space $\X$ is the quotient of the disjoint union 
	\begin{equation*}
	\coprod_{i=1}^n ( \V_i \times \{a_i\} )
	\coprod_{i=1}^{n-1} (\E_i \times [a_i , a_{i+1}]) 
	\coprod (\E_0 \times (-\infty, a_1]) 
	\coprod( \E_n \times [a_n,\infty)) 
	\end{equation*}
	with  respect to the identifications $(\ell_i(e), a_i) \sim (e, a_i)$ and $(r_i(e), a_{i+1}) \sim (e, a_{i+1})$, with the map $f$  being the projection onto the second factor. 
\end{dfn}

We denote by $\Reeb$ the full subcategory of $\Rtop$ whose objects are Reeb graphs.
Reeb graphs are naturally equipped with a strong flow $U$ called \textbf{topological smoothing} \cite{deSilva2016}.
While this flow can be defined combinatorially, it is unnecessary to go into detail on the most general case here as we will immediately restrict our attention to the subcategory of  merge trees.

\begin{dfn}
	A Reeb graph $(\X,f)$ is said to be a \textbf{merge tree} if $\ell_i = \id$ for all $i$ and topmost $E_n$ is the set with one element. 
	Merge trees form a full subcategory of $\Reeb$ denoted by $\Merge$. 
\end{dfn}
\begin{obs}
	\label{obs:merge-smoothing}
	Because merge trees have only \raisebox{\depth}{\rotatebox{180}{Y}}-type interior vertices and an infinite upper tail, the topological $\e$-smoothing $U_\e$ can be thought of as simply shifting each point in the tree $\e$-units downward making a new merge tree. 
Specifically, this is means $U_\e(\X,f) \cong (\X,f_\e)$ 
where $f_\e(x) = f(x) - \e$. 
\end{obs}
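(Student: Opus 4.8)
The plan is to unwind the definition of the topological smoothing functor $U_\e$ from \cite{deSilva2016} and check that, on the restricted class of merge trees, it induces nothing more than a downward reparametrization of the function. Recall that $U_\e$ is computed by thickening: the $\R$-space $U_\e(\X,f)$ is isomorphic to the Reeb graph of the product $(\X\times[-\e,\e],\tilde f)$ with $\tilde f(x,t)=f(x)+t$. Projecting onto the first factor identifies the level set $\tilde f^{-1}(c)$ with the slab $f^{-1}([c-\e,c+\e])$, so the points of $U_\e(\X,f)$ lying over a value $c$ are in bijection with the components $\pi_0\big(f^{-1}([c-\e,c+\e])\big)$, while the attaching maps record how these slab-components include into one another as $c$ varies. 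The whole argument therefore reduces to understanding these slabwise components for a merge tree.

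First I would establish the key structural lemma: for a merge tree and any $c$, sending a component of $f^{-1}([c-\e,c+\e])$ to its highest point at level $c+\e$ gives a bijection $\pi_0\big(f^{-1}([c-\e,c+\e])\big)\xrightarrow{\cong}\pi_0\big(f^{-1}(\{c+\e\})\big)$. The two defining properties of a merge tree do all the work. Since $\ell_i=\id$ there are no down-forks, and since the topmost edge set is a single unbounded tail, from every point the unique upward ray is $f$-monotone and escapes to $+\infty$. Consequently (i) every point of the slab can be pushed up, staying inside the slab, to level $c+\e$, so every slab-component meets that level; and (ii) if two distinct points at level $c+\e$ were joined inside the slab, their upward rays would have merged below $c+\e$, forcing the two points to coincide above that merge, a contradiction. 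Thus each slab-component meets level $c+\e$ in exactly one point. This is precisely the statement that thickening neither creates nor destroys branches; it only relabels the level at which each branch is read off.

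Then I would promote this levelwise bijection to an isomorphism of $\R$-spaces. Writing $f_\e=f-\e$, property (ii) identifies the points of $U_\e(\X,f)$ over $c$ with $f^{-1}(\{c+\e\})=f_\e^{-1}(\{c\})$, and the same upward-ray argument shows that the slab inclusion maps agree with the inclusion maps of the level structure of $(\X,f_\e)$. Packaging these identifications over all $c$ yields a homeomorphism $\X\to\X$ intertwining the smoothed function with $f-\e$, i.e. $U_\e(\X,f)\cong(\X,f_\e)$, which is the claim.

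The main obstacle is the structural lemma, and specifically the verification that thickening does not merge distinct branches. For a general Reeb graph this step genuinely fails: down-forks and finite-height local maxima let the slab glue together components that are separate at the top level, which is exactly how smoothing cancels short features. The content of the observation is that the merge-tree hypotheses --- identity left-attaching maps and a single unbounded tail --- rule out precisely these phenomena, so the only surviving effect is the shift. A secondary, bookkeeping-level obstacle is pinning down the smoothing convention (symmetric thickening by $[-\e,\e]$) so that the shift comes out to $\e$ rather than $2\e$, and confirming compatibility of the bijections with morphisms so that the result is a genuine isomorphism in $\Merge$ rather than a merely levelwise correspondence.
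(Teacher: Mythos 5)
You should know at the outset that the paper contains no proof of this statement: it is labeled an observation, justified only by a one-line appeal to merge trees having only merge-type interior vertices and a single infinite upper tail, with the smoothing functor itself imported from de Silva et al. So there is no paper proof to compare against; judged on its own, your proposal is correct, and it formalizes exactly the intuition the paper gestures at. The reduction of $U_\e$ to slab components (the level set $\tilde f^{-1}(c)$ of the thickening is homeomorphic to $f^{-1}([c-\e,c+\e])$ via $x\mapsto(x,c-f(x))$, so the fiber of $U_\e(\X,f)$ over $c$ is $\pi_0\bigl(f^{-1}([c-\e,c+\e])\bigr)$) is the right unwinding of the definition, and your structural lemma --- each slab component of a merge tree contains exactly one point at the top level $c+\e$ --- is precisely where the two merge-tree axioms enter and precisely what fails for general Reeb graphs.

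Two refinements would make it airtight. First, in the uniqueness half of the lemma, if $y_1\neq y_2$ at level $c+\e$ lie in one slab component, then the arc joining them passes through the join $y_1\vee y_2$, which therefore lies in the slab; its level is thus at most $c+\e$, and since it is also at least $\max(f(y_1),f(y_2))=c+\e$, it equals $c+\e$ exactly (not strictly below, as your wording suggests). The contradiction is then that the monotone subpath from $y_i$ up to the join has equal function values at both ends, hence is constant, forcing $y_1=y_1\vee y_2=y_2$. Same conclusion, slightly different mechanism. Second, the final packaging step can be made concrete in one line: the map $x\mapsto q(x,-\e)$, where $q$ is the Reeb quotient of the thickening, is continuous, carries the function $f-\e$ to the induced function, and is levelwise bijective by your lemma; since the slabs $f^{-1}([a,b])$ of a merge tree are compact, this map is proper, hence closed, hence a homeomorphism, and its compatibility with the attaching maps is your upward-ray argument applied to the asymmetric slabs $f^{-1}([c-\e,c'+\e])$ that represent edges of the smoothed graph. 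With these touch-ups, your outline is a complete proof of the observation, supplying an argument the paper itself omits.
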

Merge trees are closed under the topological smoothings $U_\e$, and hence, merge trees together with the topological smoothings form a category with a strong flow $(\Merge,U)$.
We denote by $d_{\Merge}$ the interleaving distance induced by the strong flow $U$ on $\Merge$.

Though it follows from Thm.~\ref{thm:Functoriality}, it was originally proved by Morozov et al.~that 
\begin{thm}[{\cite[Lem.~1]{Morozov2013}}]
	$d_I^\Merge$ is an extended pseudometric on $\Merge$.
\end{thm}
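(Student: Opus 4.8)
The quickest route is to appeal to the machinery already assembled. We have just observed that merge trees are closed under the topological smoothings $U_\e$, so that $(\Merge, U)$ is a category equipped with a strong flow in the sense of Defn.~\ref{dfn:CatWith[0,infty]Action}. The statement is then an immediate instance of Thm.~\ref{thm:Functoriality}: that theorem asserts that the interleaving distance $d_{(\CC,\Ffunc)}$ attached to \emph{any} category with a flow is an extended pseudometric on its objects, and taking $(\CC,\Ffunc)=(\Merge,U)$ yields precisely that $d_{\Merge}$ is an extended pseudometric on $\Merge$. From this vantage point the only thing to check is that $(\Merge,U)$ genuinely satisfies the flow axioms, which is the content of Obs.~\ref{obs:merge-smoothing} together with the compatibility of the downward shifts under composition.

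For completeness I would also sketch the direct argument of Morozov et al., which simply unwinds what Thm.~\ref{thm:Functoriality} does in this special case; two properties must be verified. For \emph{reflexivity}, given a merge tree $X$ and any $\e\ge 0$, the unit $u_X:X\to U_0 X$ followed by the comparison map $U_{(0\le\e),X}:U_0X\to U_\e X$ supplies a morphism $X\to U_\e X$; taking both legs of the interleaving equal to this morphism yields an $\e$-interleaving of $X$ with itself, so $d_{\Merge}(X,X)\le\e$ for every $\e$ and hence $d_{\Merge}(X,X)=0$.

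For the \emph{triangle inequality}, given an $\e$-interleaving $(\varphi,\psi)$ of $X$ and $Y$ and a $\zeta$-interleaving $(\varphi',\psi')$ of $Y$ and $Z$, I would produce an $(\e+\zeta)$-interleaving of $X$ and $Z$ by composing and collapsing the iterated smoothings with the flow's multiplication $\mu$:
\[
\Phi=\mu_{\e,\zeta,Z}\circ U_\e(\varphi')\circ\varphi:X\to U_{\e+\zeta}Z,
\qquad
\Psi=\mu_{\zeta,\e,X}\circ U_\zeta(\psi)\circ\psi':Z\to U_{\e+\zeta}X.
\]
Taking the infimum over admissible $\e,\zeta$ then gives $d_{\Merge}(X,Z)\le d_{\Merge}(X,Y)+d_{\Merge}(Y,Z)$.

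The main obstacle is verifying that $(\Phi,\Psi)$ genuinely satisfies the interleaving pentagons of Diagram~\ref{eq:interleaving}. Because $U$ is only a \emph{strong} (not strict) flow, those pentagons do not collapse to the triangles familiar from the persistence-module setting, so one must paste the two given interleaving diagrams together while repeatedly invoking naturality of the comparison maps $U_{(\cdot\le\cdot)}$ and the associativity and unit coherences $\mu$ and $u$ of Defn.~\ref{dfn:CatWith[0,infty]Action}. This is exactly the bookkeeping that is packaged once and for all inside the proof of Thm.~\ref{thm:Functoriality}, which is why invoking that theorem is the cleaner option.
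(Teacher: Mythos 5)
Your proposal is correct and takes essentially the same approach as the paper: the paper gives no separate argument, simply remarking that the statement follows from Thm.~\ref{thm:Functoriality} applied to the category with a strong flow $(\Merge,U)$, which is exactly your first paragraph. Your additional sketch of the direct composition-of-interleavings argument of Morozov et al.\ is sound but goes beyond what the paper itself records.
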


\subsection{Merge Trees as Posets}

Notice that for a merge tree $(\X,f)$, every pair of points $x,y$ in $\X$ has a unique path $\gamma$ from $x$ to $y$.  
We say this path is \textbf{monotone increasing with respect to $f$} if $t \leq s$ implies $f\gamma(t) \leq f\gamma(s)$.
Define $x \leq_f y$ if the unique path from $x$ to $y$ is monotone increasing.
This gives a poset on the points of $\X$.
In particular, because $(\X,f)$ is a merge tree, for every $x,y$ in $\X$ there exists a unique vertex $v$ of minimum height that connects with $x$ and $y$ known as the \textbf{least common ancestor of $x$ and $y$}. 
We denote this by $x\vee y$. 
The least common ancestor $x\vee y$ of $x$ and $y$ satisfies the following two properties:
\begin{enumerate}
	\item $x\leq x\vee y$ and $y\leq x\vee y$
	\item if $x\leq z$ and $y\leq z$, then $x\vee y\leq z$.
\end{enumerate}
This gives us an operation on $\X$ called the \textbf{join}.
Thus the poset $(\X,\leq_f)$ forms in particular a join-semilattice $(\X,\leq_f,\vee)$.

\begin{obs}
	\label{obs:meet}
	Given a Reeb graph $(\X,f)$ we can define a poset structure in the same way as for merge trees. Also we can define a join $\vee$ as well as a meet operation $\wedge$. 
	From the prospect of $(\X,f)$ being a merge tree however, the meet $\wedge$ is a trivial operation to define because, given two points $x,y$ in the merge tree $\X$, if we assume that their meet $x\wedge y$ exists then either $x\wedge y=x$ or $x\wedge y=y$ (in other words $x\leq_f y$ or $y\leq_f x$).
	Because of that we will not use the meet operation again in this paper. 
\end{obs}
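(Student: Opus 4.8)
The plan is to show that the mere existence of the meet $x\wedge y$ already forces $x$ and $y$ to be comparable in the poset $(\X,\leq_f)$; once comparability holds, the meet is automatically the smaller of the two points, giving $x\wedge y = x$ or $x\wedge y = y$. The mechanism I would exploit is that a merge tree, unlike a general Reeb graph, possesses a unique ``upward'' direction out of every point.

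First I would make precise the following structural fact: in a merge tree $(\X,f)$, every point $p\in\X$ lies on a unique monotone increasing path running toward higher function values and terminating in the infinite upper tail. This is exactly where the defining condition $\ell_i=\id$ of a merge tree is used: since each left attaching map $\ell_i:\E_i\to\V_i$ of Definition~\ref{dfn:ReebGraph} is then a bijection, every vertex at level $a_i$ is the lower endpoint of exactly one edge going upward, so there is never a choice of branch as one moves to higher function values. (A point in the interior of an edge is handled by first moving up along that edge to its upper vertex, after which the vertex condition applies.) I would record this as the statement that the set $\{\, q\in\X : p\leq_f q \,\}$ coincides with this unique upward ray and is therefore totally ordered by $\leq_f$.

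Next, suppose the meet $z:=x\wedge y$ exists. Being a lower bound, it satisfies $z\leq_f x$ and $z\leq_f y$, so by definition the unique path from $z$ to $x$ and the unique path from $z$ to $y$ are both monotone increasing; that is, $x$ and $y$ both lie in $\{\, q : z\leq_f q \,\}$. By the structural fact this set is totally ordered, so $x$ and $y$ are comparable: either $x\leq_f y$ or $y\leq_f x$. In the first case $x$ is a lower bound of $\{x,y\}$ dominating every other lower bound (any lower bound $w$ satisfies $w\leq_f x$ by hypothesis), whence $x\wedge y = x$; symmetrically $y\leq_f x$ gives $x\wedge y = y$. This is the desired conclusion.

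The step I expect to be the main obstacle is the structural fact itself, namely a careful proof that the upward path from a point is unique. Its content is purely combinatorial and comes directly from unwinding Definition~\ref{dfn:ReebGraph} in the merge-tree case, but one must be attentive to two points: the passage from the vertex-level condition to arbitrary points in edge interiors, and confirming that two monotone increasing paths emanating from the same $z$ are nested rather than merely sharing the endpoint $z$. Both are routine given the tree structure, yet they are precisely where the merge-tree hypotheses (as opposed to general Reeb graph hypotheses) do the work, which is why the meet is nontrivial for Reeb graphs but degenerate here.
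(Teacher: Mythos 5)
Your proof is correct, and it takes the approach the paper implicitly relies on: the paper states this observation without any proof, and your argument via the totally ordered up-set $\{q : z \leq_f q\}$ (which follows from the condition $\ell_i = \id$ forcing a unique upward edge at every vertex, hence a unique upward ray from every point) is precisely the formalization of what the paper takes for granted. The reduction from ``the meet exists'' to ``$x$ and $y$ share a lower bound, hence both lie on that bound's upward ray, hence are comparable'' is sound, and the final step identifying the meet with the smaller of the two comparable points is routine, so no gap remains.
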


\begin{dfn}
	A vertex $v$ in a merge tree $(\X,f)$ is said to be a \textbf{leaf} if for every $x\in\X$, $x\leq_f v\Rightarrow x=v$.
	Denote by $L(\X,f)$ the set of all leaves of a merge tree $(\X,f)$.
\end{dfn}

\subsection{Phylogenetic Trees with $n$-leaves}
Fix a positive integer $n$. 
\begin{dfn}
	A \textbf{phylogenetic tree with $n$-leaves}, denoted by $(\X,f,\ell)$, is a merge tree $(\X,f)$ together with a bijection $\ell:\{1,\ldots,n\}\to L(\X,f)$, $i\mapsto \ell(i)$ called the \textbf{labeling}.
\end{dfn}

\begin{dfn}
	\label{dfn:MorphPhTrees}
	A \textbf{morphism} $(\X,f,\ell)\xrightarrow{\varphi}(\Y,g,\mu)$ of phylogenetic trees with $n$-leaves is a map $\varphi:\X\to\Y$ such that it is
	\begin{enumerate}
		\item function preserving; $g\circ\varphi=f$;  and 
		\item label preserving: $\mu(i)\leq_g\varphi(\ell(i))$ for all $i=1,\ldots,n$.
	\end{enumerate}
\end{dfn}
With this notion of morphisms, the collection of all phylogenetic trees with $n$ leaves forms a category $\mathbf{PhTree}_{n}$.
We now state some simple but helpful properties of these morphisms.
\begin{prop}
	\label{prop:MorphismProperties}
	If $(\X,f,\ell)\xrightarrow{\varphi}(\Y,g,\mu)$ is a morphism of phylogenetic trees with $n$ leaves, then
	\begin{enumerate}
		\item[i.] if $x_1\leq_f x_2$ then $\varphi(x_1)\leq_g\varphi(x_2)$ for every $x_1,x_2$ in $\X$ 
		\item[ii.]  $\varphi(x_1)\vee_g\varphi(x_2)\leq \varphi(x_1\vee_f x_2)$ for every $x_1,x_2$ in $\X$ 
		\item[iii.]   $\mu(i)\vee_g\mu(j)\leq_g\varphi(\ell(i)\vee_f\ell(j))$ for all $1\leq i\leq j\leq n$.
	\end{enumerate}
\end{prop}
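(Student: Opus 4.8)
The plan is to establish part~(i) by a topological argument special to merge trees, and then to obtain~(ii) and~(iii) as purely order-theoretic consequences of~(i) together with the fact that the join is a genuine least upper bound. The key technical device is the following characterization, which I would isolate as a lemma: in a merge tree $(\Y,g)$, two points satisfy $a\leq_g b$ \emph{if and only if} there exists a continuous path from $a$ to $b$ along which $g$ is nondecreasing.

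For part~(i), recall that $\varphi$ is a continuous, function-preserving map, so $g\circ\varphi=f$. If $x_1\leq_f x_2$, there is a path $\gamma$ from $x_1$ to $x_2$ that is monotone increasing with respect to $f$; its image $\varphi\circ\gamma$ is then a path from $\varphi(x_1)$ to $\varphi(x_2)$ along which $g$ is nondecreasing, since $g(\varphi(\gamma(t)))=f(\gamma(t))$. Applying the lemma yields $\varphi(x_1)\leq_g\varphi(x_2)$. To prove the lemma, the forward direction is immediate from the definition of $\leq_g$. For the converse I would argue by contradiction using the tree structure and the join $v=a\vee_g b$, which always exists. If $a\not\leq_g b$, then either $b<_g a$, in which case $g(a)>g(b)$ already contradicts nondecreasingness of $g$ along any path from $a$ to $b$; or $a$ and $b$ are $\leq_g$-incomparable, in which case $v$ lies strictly above both (strict monotonicity of $g$ along edges) and is an interior point of the unique simple $a$--$b$ path, hence a cut point separating $a$ from $b$. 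Any continuous path from $a$ to $b$ must therefore pass through $v$ at some time $t_0$, and then $g$ strictly decreases from $g(v)>g(b)$ on the way down to $b$, contradicting nondecreasingness.

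Part~(ii) follows formally. Since $x_1\leq_f x_1\vee_f x_2$ and $x_2\leq_f x_1\vee_f x_2$ by the first defining property of the least common ancestor, part~(i) gives $\varphi(x_1)\leq_g\varphi(x_1\vee_f x_2)$ and $\varphi(x_2)\leq_g\varphi(x_1\vee_f x_2)$. Thus $\varphi(x_1\vee_f x_2)$ is a common upper bound of $\varphi(x_1)$ and $\varphi(x_2)$, and minimality of the join among common upper bounds yields $\varphi(x_1)\vee_g\varphi(x_2)\leq_g\varphi(x_1\vee_f x_2)$. Part~(iii) then combines~(ii) with label preservation: from $\mu(i)\leq_g\varphi(\ell(i))$ and $\mu(j)\leq_g\varphi(\ell(j))$, together with $\varphi(\ell(i)),\varphi(\ell(j))\leq_g\varphi(\ell(i))\vee_g\varphi(\ell(j))$, the point $\varphi(\ell(i))\vee_g\varphi(\ell(j))$ is a common upper bound of $\mu(i)$ and $\mu(j)$, so $\mu(i)\vee_g\mu(j)\leq_g\varphi(\ell(i))\vee_g\varphi(\ell(j))$; chaining with the instance $\varphi(\ell(i))\vee_g\varphi(\ell(j))\leq_g\varphi(\ell(i)\vee_f\ell(j))$ of part~(ii) gives the claim.

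The main obstacle is entirely contained in part~(i): parts~(ii) and~(iii) are formal manipulations valid in any poset whose join is a least upper bound, whereas~(i) requires the geometric fact that the ancestor order detects monotone paths. This is exactly where the specific structure of merge trees must be invoked — discrete level sets and a unique ascending direction at each point, so that an incomparable pair's least common ancestor is a separating cut point strictly above both. I expect the only delicate point is justifying strict monotonicity of $g$ along edges, which is what prevents a path from descending from $v$ to $b$ without decreasing $g$.
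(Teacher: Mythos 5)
Your proof is correct. For parts (ii) and (iii) it coincides with the paper's own argument: both deduce (ii) by applying (i) to $x_1,x_2\leq_f x_1\vee_f x_2$ and invoking the least-upper-bound property of the join, and both obtain (iii) from label preservation together with (ii) and that property again (you chain the inequalities in a slightly different order, but the manipulations are the same). Where you genuinely depart from the paper is part (i): the paper dismisses it in one line as following ``directly by definition of $\leq_f$'' and function preservation, whereas you isolate and prove the lemma that in a merge tree $a\leq_g b$ if and only if \emph{some} continuous path from $a$ to $b$ is nondecreasing in $g$. This is not padding; it is exactly the point the paper elides. The relation $\leq_g$ is defined via the \emph{unique simple} path from $\varphi(x_1)$ to $\varphi(x_2)$, while the path one gets for free, $\varphi\circ\gamma$, need not be injective, so one really does need an argument that a possibly non-simple monotone path certifies the order relation. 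Your cut-point argument supplies it: if $a$ and $b$ were incomparable, their join would be an interior point of the arc from $a$ to $b$, lying strictly above both and separating them, so no continuous path between them could be nondecreasing in $g$; and the strict monotonicity of $g$ along edges, which you correctly flag as the delicate ingredient, holds because the function on each edge of a Reeb graph is projection onto an interval $[a_i,a_{i+1}]$ with $a_i<a_{i+1}$. One small caveat: you assume $\varphi$ is continuous, while Defn.~\ref{dfn:MorphPhTrees} literally says only ``a map''; continuity is clearly intended (these morphisms refine morphisms of $\R$-spaces), and both your argument and the paper's implicitly require it.
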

\begin{proof} (i) The first property follows directly by definition of $\leq_f$ and the fact that $\varphi:\X\to\Y$ is function preserving map. 
	(ii) Let $x_1,x_2$ in $\X$.
	Then by (i), the inequalities $x_1,x_2\leq_f x_1\vee x_2$ imply that $\varphi(x_1),\varphi(x_2)\leq_g \varphi(x_1\vee x_2)$.
	Then by the second property of the join, we obtain 
	$$\varphi(x_1)\vee_g\varphi(x_2)\leq_g \varphi(x_1\vee_f x_2).$$ 
	(iii) For the third property, $\mu(i)\leq_f \varphi(\ell(i))$ and $\mu(j)\leq_f \varphi(\ell(j))$ imply  $\mu(i),\mu(j)\leq_g\varphi(\ell(i))\vee_g\varphi(\ell(j))$ and property (ii) gives $\mu(i),\mu(j)\leq_g\varphi(\ell(i)\vee_f\ell(j))$. 
	Finally the second property of the join gives $\mu(i)\vee_g\mu(j)\leq_g\varphi(\ell(i)\vee_f\varphi\ell(j))$ as desired.
\end{proof}

\begin{prop}
	The category $\mathbf{PhTree}_{n}$ is a poset. 
\end{prop}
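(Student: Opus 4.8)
The plan is to show that for any two phylogenetic trees $(\X,f,\ell)$ and $(\Y,g,\mu)$ with $n$ leaves there is \emph{at most one} morphism $\varphi\colon(\X,f,\ell)\to(\Y,g,\mu)$; by the definition of a poset category this is exactly what must be proved. The key structural fact I would exploit is that, because $(\Y,g)$ is a merge tree, every vertex has exactly one neighbor of higher function value and there is a single infinite upper tail. Consequently, for any point $p\in\Y$ the up-set $\{y\in\Y : p\leq_g y\}$ is totally ordered by $\leq_g$ --- it is a single ray running monotonically from $p$ up to $\infty$ --- so it contains exactly one point at each height $h\geq g(p)$. Thus a point of $\Y$ is completely determined by naming a point below it together with its $g$-value.

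With this in hand, the strategy is to show that the value $\varphi(x)$ is forced for every $x\in\X$, independently of the choice of morphism. Fix $x\in\X$ and let $S_x=\{\,i : \ell(i)\leq_f x\,\}$ be the set of labels whose leaves lie below $x$; note $S_x\neq\emptyset$ since every point of a merge tree lies above at least one leaf. For each $i\in S_x$, order-preservation (Prop.~\ref{prop:MorphismProperties}(i)) together with label-preservation $\mu(i)\leq_g\varphi(\ell(i))$ gives $\mu(i)\leq_g\varphi(x)$, and hence $\bigvee_{i\in S_x}\mu(i)\leq_g\varphi(x)$. Writing $p'=\bigvee_{i\in S_x}\mu(i)$, the point $\varphi(x)$ therefore lies on the upward ray above $p'$, while function-preservation forces $g(\varphi(x))=f(x)$. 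Since $p'\leq_g\varphi(x)$ we have $g(p')\leq f(x)$, so by the ray structure there is exactly one such point, and $\varphi(x)$ must equal it.

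Because $S_x$, and hence $p'$, depends only on the objects $(\X,f,\ell)$ and $(\Y,g,\mu)$ and not on the chosen morphism, this computation pins down $\varphi(x)$ uniquely at every $x$. Any two morphisms $\varphi,\varphi'$ therefore agree pointwise, so there is at most one morphism and $\mathbf{PhTree}_{n}$ is a poset. I expect the main obstacle to be the uniformity of the argument over \emph{all} points $x$, not merely leaves or branch vertices: the crux is the lemma that the up-set of any point is a single totally ordered ray (a direct consequence of the merge-tree axioms $\ell_i=\id$ and $|E_n|=1$), which is exactly what guarantees that ``a lower bound plus a height'' specifies a unique point of $\Y$. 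Care is also needed to confirm that $S_x$ is nonempty and that $g(p')\leq f(x)$, both of which follow from $\varphi$ being order- and function-preserving.
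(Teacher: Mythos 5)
Your proof is correct and takes essentially the same route as the paper's: both arguments force the value of any morphism at a point $x$ by using a leaf below $x$, label- and order-preservation, the fact that up-sets in a merge tree are totally ordered chains (the paper packages this as Obs.~\ref{obs:meet}, the triviality of meets), and function-preservation to pin down the height. The only cosmetic difference is that the paper compares two morphisms $\varphi,\psi$ directly using a single leaf below $x$, whereas you identify the forced value explicitly as the unique point at height $f(x)$ on the ray above $\bigvee_{i\in S_x}\mu(i)$; a single leaf already suffices, but the extra join is harmless.
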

\begin{proof}
	Assume that $\varphi,\psi\in\Hom_{\mathbf{PhTree}_n}((\X,f,\ell),(\Y,g,\mu))$. 
	We will show that $\phi=\psi$.
	
	Let $x\in\X$. 
	Then there exists a leaf $\ell(i)$ for some index $i=1,\ldots,n$ such that $\ell(i)\leq_f x$. 
	By applying $\varphi$ and $\psi$ we obtain $\mu(i)\leq_g\varphi(\ell(i))\leq_g\varphi(x)$ and 
	$\mu(i)\leq_g\psi(\ell(i))\leq_g\psi(x)$. 
	Thus $\mu(i)\leq_g \varphi(x)\wedge\psi(x)$
	By Obs.~\ref{obs:meet}, $\varphi(x)\leq_g\psi(x)$ or $\psi(x)\leq_g\varphi(x)$.
	Since $\varphi,\psi$ are both function preserving then we have that $\varphi(x)=\psi(x)$.
	Therefore $\varphi=\psi$.
\end{proof}
Because $\mathbf{PhTree}_{n}$ is a poset we will denote the morphisms $(\X,f,\ell)\xrightarrow{\varphi}(\Y,g,\mu)$ in $\mathbf{PhTree}_{n}$ by $(\X,f,\ell)\preceq(\Y,g,\mu)$ for simplicity.
\begin{prop}
	The restriction of the strong flow $U$ on phylogenetic trees forms a strong flow on $\mathbf{PhTree}_n$.
\end{prop}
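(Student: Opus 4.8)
The plan is to define the restricted flow by the only reasonable formula and then check the handful of conditions that are not automatic, leaning on the facts that $U$ is already a strong flow on $\Merge$ and that $\mathbf{PhTree}_n$ is a poset. On objects I would set $U_\e(\X,f,\ell) := (\X,f_\e,\ell)$, with $f_\e = f-\e$ and the \emph{same} labeling $\ell$. The first thing to verify is that this is again a phylogenetic tree. By Obs.~\ref{obs:merge-smoothing} the pair $(\X,f_\e)$ is a merge tree, and the key observation is that the poset order is unchanged, $\leq_{f_\e}\,=\,\leq_f$, since subtracting the constant $\e$ does not affect whether a path is monotone increasing. Consequently $L(\X,f_\e)=L(\X,f)$, so $\ell$ is still a bijection onto the leaf set and $(\X,f_\e,\ell)$ is a legitimate object of $\mathbf{PhTree}_n$.

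Next I would check that $U_\e$ restricts correctly to morphisms. Given $(\X,f,\ell)\preceq(\Y,g,\mu)$ realized by an underlying map $\varphi$, I take $U_\e\varphi$ to be the same underlying map. It is function preserving for the shifted functions, since $g_\e\circ\varphi = g\circ\varphi-\e = f-\e = f_\e$, and it is label preserving because $\leq_{g_\e}\,=\,\leq_g$ makes the inequality $\mu(i)\leq_{g_\e}\varphi(\ell(i))$ identical to the one that already holds. Hence $U_\e$ is a well-defined endofunctor of $\mathbf{PhTree}_n$, and $\e\mapsto U_\e$ assembles into a functor $\R_{\geq0}\to\End(\mathbf{PhTree}_n)$ once I confirm that the comparison maps $U_{(\e\leq\zeta)}$ descend. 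Each such component, for $\e\leq\zeta$, is the canonical $\Merge$-morphism $(\X,f_\e)\to(\X,f_\zeta)$ whose underlying map $c$ pushes each point monotonically up the tree to its ancestor of $f$-value raised by $\zeta-\e$; in particular $\ell(i)\leq_{f_\zeta} c(\ell(i))$, so $c$ is label preserving and thus a morphism in $\mathbf{PhTree}_n$.

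It then remains to produce the coherence data $u$ and $\mu$. These are simply the restrictions of the corresponding natural transformations of the strong flow on $\Merge$: the underlying maps of $u:I_{\mathbf{PhTree}_n}\Rightarrow U_0$ and of $\mu_{\e,\zeta}:U_\e U_\zeta\Rightarrow U_{\e+\zeta}$ are the canonical isomorphisms identifying $(\X,f)$ with $(\X,f_0)$ and $(\X,f-\zeta-\e)$ with $(\X,f_{\e+\zeta})$, respectively. Because they are supported on the same underlying space with the order preserved, they are automatically label preserving and hence live in $\mathbf{PhTree}_n$. All the required coherence pentagons and naturality squares commute for free, since $\mathbf{PhTree}_n$ is a poset and there is at most one morphism between any two objects. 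For strength, note that $u$ and $\mu_{\e,\zeta}$ are isomorphisms in $\Merge$ whose inverses are again order- and label-preserving underlying maps, hence isomorphisms in $\mathbf{PhTree}_n$; so the restricted flow is strong.

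The main obstacle — indeed essentially the only non-formal point — is the identification $\leq_{f_\e}\,=\,\leq_f$, together with its corollary that the structural maps of the $\Merge$-flow (especially the comparison maps $U_{(\e\leq\zeta)}$) remain label preserving. Once this is in hand, everything else is either inherited verbatim from the known strong flow $(\Merge,U,u,\mu)$ or forced by the poset structure of $\mathbf{PhTree}_n$.
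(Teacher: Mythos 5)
Your proof is correct and takes essentially the same approach as the paper's: the flow data is inherited from $(\Merge,U)$, and the only substantive point is that the $\e$-shift does not change the order $\leq_f$ (hence leaves and labels are preserved), which is exactly the paper's observation that smoothing shifts points downward while preserving the labeling. Your write-up simply makes explicit the functoriality, coherence, and strength checks that the paper leaves implicit as being forced by the poset structure.
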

\begin{proof} 
	We only need to show that the topological $\e$-smoothing $U_\e$ on a phylogenetic tree $(\X,f,\ell)$ is label preserving. 
	Indeed, after the $\e$-smoothing each point $x$ in $\X$ is simply shifted $\e$-units downwards to a point $y$ which satisfies $y\leq_f x$ and by Obs.~\ref{obs:merge-smoothing} it satisfies $f_\e(y)=f(x)-\e$, thus preserving the labeling $\ell$ in particular.
\end{proof}
This poset with a flow is denoted by $(\mathbf{PhTree}_{n},U)$ and the corresponding interleaving distance by $d_{\mathbf{PhTree}_n}$.

\section{The $\ell^\infty$-Cophenetic Metric as an Interleaving Distance}
\label{sec:CopheneteicIsInterleaving}
The information contained in a phylogenetic tree with $n$ leaves can be stored to a vector in $\R^{n(n+1)/2}$
known as the \textbf{cophenetic vector} and provide a collection of  $\ell^p$-type metrics for phylogenetic trees.
This was developed in detail by Cardona et al.~\cite{Cardona2013}. 
In this section we realize the $\ell^\infty$-version of this metric as an interleaving distance.

First we define the $\ell^\infty$-cophenetic metric.
\begin{dfn}[\cite{Cardona2013}]
	Let $(\X,f,\ell)$ be a phylogenetic tree with $n$-leaves. 
	To this tree, we associate the \textbf{cophenetic vector} 
	\begin{equation*}
	\Cfunc(\X,f,\ell):=\bigg(f\big(\ell(i)\vee\ell(j)\big)\bigg)_{1\leq i\leq j\leq n}.
	\end{equation*}
	The map $\Cfunc:\mathbf{PhTree}_n\to\R^{n(n+1)/2}$, $(\X,f,\ell)\mapsto\Cfunc(\X,f,\ell)$ is called the \textbf{cophenetic map}.
	The \textbf{$\ell^\infty$-cophenetic metric} between two trees is
	\begin{equation*}
	d_C((\X,f,\ell), (\Y,g,\mu)) = \|\Cfunc(\X,f,\ell)- \Cfunc (\Y,g,\mu)\|_\infty.
	\end{equation*}
\end{dfn}
See Fig.~\ref{fig:ExampleCophMergeTrees} for an example of this construction for two elements of $\mathbf{PhTree}_4$.
\begin{figure}[tb]
	\centering
	\includegraphics[width = .4\textwidth]{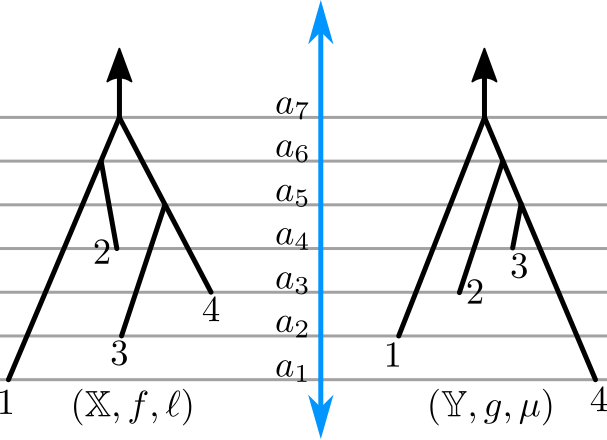}
	\qquad
	\begin{minipage}
	{.35\textwidth}
	\vspace{-1.5in}
	$ \Cfunc(\X,f,\ell) = 
	\begin{pmatrix}
	a_1 & a_6 & a_7 & a_7 \\ 
	\cdot & a_4 & a_7 & a_7\\
	\cdot & \cdot & a_2 & a_5\\
	\cdot & \cdot & \cdot & a_3
	\end{pmatrix}$

	$ \Cfunc(\Y,g,\mu) = 
	\begin{pmatrix}
	a_2 & a_7 & a_7 & a_7 \\ 
	\cdot & a_3 & a_6 & a_6\\
	\cdot & \cdot & a_4 & a_5\\
	\cdot & \cdot & \cdot & a_1
	\end{pmatrix}$
	\end{minipage}

	\caption{An example of two phylogenetic tress in $\mathbf{PhTree}_4$.  The associated cophenetic vectors are shown at right.  If $a_i = i$ for all $i$, then the $\ell^\infty$-cophenetic distance between the trees is 2.}
	\label{fig:ExampleCophMergeTrees}
\end{figure}
Thinking of $\R^{n(n+1)/2}$ as a poset $(\R^{n(n+1)/2},\geq)$, 
we wish to extend $\Cfunc$ to a functor by providing an assignment on morphisms.
Since these categories are both posets, we need only ensure that if $(\X,f,\ell)\preceq(\Y,g,\mu)$, then $\Cfunc(\X,f,\ell)\geq\Cfunc(\Y,g,\mu)$. 
Assume $(\X,f,\ell)\preceq(\Y,g,\mu)$ and thus the unique function preserving and label preserving map $(\X,f,\ell)\xrightarrow{\varphi}(\Y,g,\mu)$. 
By Prop.~\ref{prop:MorphismProperties}, for any $1 \leq i \leq j \leq n$, 
\begin{align*}
f\big(\ell(i)\vee\ell(j)\big)
&=g\circ\varphi\big(\ell(i)\vee\ell(j)\big)\\
&\geq g\big(\mu(i)\vee\mu(j)\big)
\end{align*}
which implies the required inequality in $(\R^{n(n+1)/2},\geq)$.


Consider the strict flow $\Omega$ on the poset $\R^{n(n+1)/2}=(\R^{n(n+1)/2},\geq)$ given by the $\e$-shift downward $\Omega_\e:\R^{n(n+1)/2}\to\R^{n(n+1)/2}$, 
$(r_{i,j})_{1\leq i\leq j\leq n}\mapsto(r_{i,j}-\e)_{1\leq i\leq j\leq n}$.
We can now show that the cophenetic metric of \cite{Cardona2013} is a realization of an interleaving distance.

\begin{thm}
		\label{thm:cophenetic-embedding}
	The cophenetic map $\Cfunc$ forms a fully faithful strict equivariant functor 
	\begin{equation*}
		\Cfunc:(\mathbf{PhTree}_n,U)\to(\R^{n(n+1)/2},\Omega)
	\end{equation*}
 between posets with a flow.
\end{thm}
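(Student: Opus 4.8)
The plan is to verify the hypotheses of Prop.~\ref{Prop:Poset}, which reduces the claim to three ingredients: the object assignment $(\X,f,\ell)\mapsto\Cfunc(\X,f,\ell)$ (already given), a compatible assignment on morphisms (functoriality), and, since we want strictness rather than merely colax, an \emph{equality} $\Cfunc U_\e=\Omega_\e\Cfunc$ in place of the inequality appearing in that proposition. Because both $\mathbf{PhTree}_n$ and $(\R^{n(n+1)/2},\geq)$ are posets, faithfulness is automatic (hom-sets are singletons or empty), so the only substantive part of ``fully faithful'' is \emph{fullness}: I must show that $\Cfunc$ \emph{reflects} the order, i.e.\ that $\Cfunc(\X,f,\ell)\geq\Cfunc(\Y,g,\mu)$ forces $(\X,f,\ell)\preceq(\Y,g,\mu)$. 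The forward implication is exactly the computation carried out just before the theorem using Prop.~\ref{prop:MorphismProperties}, so the two genuine tasks are strict equivariance and this reflection.

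Strict equivariance I expect to be routine. By Obs.~\ref{obs:merge-smoothing}, $U_\e(\X,f,\ell)=(\X,f-\e,\ell)$, and since $f-\e$ induces the same order $\leq_f$ on $\X$, the join operation, and in particular each least common ancestor $\ell(i)\vee\ell(j)$, is unchanged. Hence
\begin{equation*}
\Cfunc\big(U_\e(\X,f,\ell)\big)=\big((f-\e)(\ell(i)\vee\ell(j))\big)_{1\leq i\leq j\leq n}=\big(f(\ell(i)\vee\ell(j))-\e\big)_{1\leq i\leq j\leq n}=\Omega_\e\,\Cfunc(\X,f,\ell),
\end{equation*}
so the coherence transformation is the identity and the functor is strict.

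The heart of the proof is the reflection of morphisms. Assume $\Cfunc(\X,f,\ell)\geq\Cfunc(\Y,g,\mu)$, i.e.\ $g(\mu(i)\vee\mu(j))\leq f(\ell(i)\vee\ell(j))$ for all $i\leq j$; I must produce a function- and label-preserving map $\varphi:\X\to\Y$. My plan is to define $\varphi$ pointwise: for $x\in\X$ let $S_x=\{\,i:\ell(i)\leq_f x\,\}$ be the set of labels lying below $x$, set $v_x=\bigvee_{i\in S_x}\mu(i)\in\Y$, and let $\varphi(x)$ be the unique point at height $f(x)$ on the monotone path from $v_x$ up the infinite tail of $\Y$. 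For this to make sense I need $g(v_x)\leq f(x)$; this follows because the height of a join of leaves in a merge tree is the maximum of the pairwise least-common-ancestor heights, so $g(v_x)=\max_{i,j\in S_x}g(\mu(i)\vee\mu(j))\leq\max_{i,j\in S_x}f(\ell(i)\vee\ell(j))\leq f(x)$, the last inequality holding since $\ell(i),\ell(j)\leq_f x$. By construction $g\circ\varphi=f$ (function preserving); and taking $x=\ell(i)$ gives $S_x=\{i\}$, $v_x=\mu(i)$, and $f(\ell(i))\geq g(\mu(i))$, whence $\mu(i)\leq_g\varphi(\ell(i))$ (label preserving).

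The step I expect to be the main obstacle is checking that $\varphi$ is genuinely well defined and continuous, in particular at the interior vertices of $\X$ where $S_x$ jumps. Along an edge $S_x$ is constant while $f(x)$ increases, so $\varphi$ clearly traces a monotone path in $\Y$; the delicate point is that as $x$ passes a vertex $w$ at which branches carrying label-sets $A$ and $B$ merge, the base point jumps from $v_A$ to $v_{A\cup B}$, and I must verify that the two prescriptions agree at height $f(w)$. This should follow from $v_A\leq_g v_{A\cup B}$ together with $g(v_{A\cup B})\leq f(w)$ (the same bound as above), which places $v_{A\cup B}$ on the upward path from $v_A$ and forces both recipes to reach the same point at height $f(w)$. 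Once continuity is secured, $\varphi$ is the required morphism, establishing reflection and hence, via Prop.~\ref{Prop:Poset}, that $\Cfunc$ is a fully faithful strict equivariant functor.
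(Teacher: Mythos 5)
Your proof is correct, and it does strictly more than the paper's own proof --- the difference is worth spelling out. The paper's argument consists of exactly two steps: the order-preservation computation recorded just before the theorem statement (your ``forward implication''), and the strict-equivariance computation $\Cfunc U_\e(\X,f,\ell)=\Omega_\e\Cfunc(\X,f,\ell)$ via Obs.~\ref{obs:merge-smoothing} (your second paragraph); it then invokes Prop.~\ref{Prop:Poset} to conclude full faithfulness with no further work. You instead observe that fullness is \emph{not} automatic for a functor between posets: it amounts to order reflection, and an order-preserving map can easily fail to reflect order (map a two-element antichain onto a two-element chain). Indeed, the proof of Prop.~\ref{Prop:Poset} asserts that each $\Hfunc_{X,Y}$ is bijective, but this is only forced when $\Hom_{\PP}(X,Y)\neq\emptyset$; when it is empty while $\Hom_{\QQ}(\Hfunc(X),\Hfunc(Y))$ is not, surjectivity fails. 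So your third and fourth paragraphs --- constructing $\varphi$ explicitly by sending $x$ to the point at height $f(x)$ on the upward path above $v_x=\bigvee_{i\in S_x}\mu(i)$, checking the height bound $g(v_x)\leq f(x)$, label preservation at leaves, and the patching at merge vertices via $v_A\leq_g v_{A\cup B}$ --- supply exactly the ingredient that the paper's citation of Prop.~\ref{Prop:Poset} leaves unjustified. This is not pedantry: order reflection is the half of the argument that converts the coordinatewise inequality $\Cfunc(\X,f,\ell)\geq\Omega_\e\Cfunc(\Y,g,\mu)$ into an actual interleaving morphism $(\X,f,\ell)\to U_\e(\Y,g,\mu)$, i.e.\ it is what upgrades the 1-Lipschitz bound of Thm.~\ref{thm:Functoriality} to the isometry of Cor.~\ref{cor:cophenetic-iso}. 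What the paper's route buys is brevity; what yours buys is a complete proof of fullness, which a careful reader needs anyway.

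One point to tighten in your write-up: the identity $g\bigl(\bigvee_{i\in S}\mu(i)\bigr)=\max_{i,j\in S}g\bigl(\mu(i)\vee\mu(j)\bigr)$ is asserted without proof, yet it carries the whole bound $g(v_x)\leq f(x)$. It is true, and the proof is one line: if $v=\bigvee_{i\in S}\mu(i)$ with $|S|\geq 2$, the leaves indexed by $S$ cannot all lie in a single downward branch at $v$ (the bottom vertex of that branch would be a smaller upper bound), so some pair lies in distinct branches, and the pairwise join of that pair is already $v$. With that line added, your argument is complete.
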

\begin{proof}
Let $(\X,f,\ell)$ be a phylogenetic tree (and thus a merge tree in particular).
Let $\e\geq0$ and let $U_\e(\X,f,\ell)=(\X_\e,f_\e,\ell_\e)$ be the corresponding phylogenetic tree obtained after the topological $\e$-smoothing.
By Obs.~\ref{obs:merge-smoothing} we get $f_\e(\ell_\e(i)\vee\ell_\e(j))=f(\ell(i)\vee\ell(j))-\e$ for all $1\leq i\leq j \leq n$.
Hence we have:
	\begin{align*}
	\Cfunc U_\e(\X,f,\ell)&=\Cfunc(\X_\e,f_\e,\ell_\e)\\
	&=(f_\e(\ell_\e(i)\vee\ell_\e(j)))_{1\leq i\leq j\leq n}\\
	&=(f(\ell(i)\vee\ell(j))-\e)_{1\leq i\leq j\leq n}\\ 
	&=\Omega_\e\Cfunc(\X,f,\ell).
	\end{align*} 
Thus we obtain the identity map $\Cfunc U_\e(\X,f,\ell)\to\Omega_\e\Cfunc(\X,f,\ell)$. 
Hence by Prop.~\ref{Prop:Poset} the cophenetic map $\Cfunc$ is a fully faithful strict equivariant functor $\Cfunc:\mathbf{PhTree}_n\to(\R^{n(n+1)/2},\geq)$.
\end{proof}

\begin{cor}
	\label{cor:cophenetic-iso}
	The cophenetic map forms an isometric embedding 
	\begin{align*}
		\Cfunc:\bigg(\Ob(\mathbf{PhTree}_n),d_{\mathbf{PhTree}_n}\bigg)
		\to
		\bigg (\Ob(\R^{n(n+1)/2}),||\cdot||_{\infty}\bigg)
	\end{align*}
with respect to the interleaving distances.
	As a result, we obtain the following formula for $d_{\mathbf{PhTree}_n}$:
	$$d_{\mathbf{PhTree}_n}((\X,f,\ell),(\Y,g,\mu))=\max_{1\leq i\leq j\leq n}|f(\ell(i)\vee\ell(j))-g(\mu(i)\vee\mu(j))|$$
	for every $(\X,f,\ell),(\Y,g,\mu)$ in $\mathbf{PhTree}_n$.
\end{cor}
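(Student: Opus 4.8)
The plan is to assemble the corollary from results already in hand: Theorem~\ref{thm:cophenetic-embedding}, the general isometric-embedding principle of Corollary~\ref{cor:Isometric Embedding}, and the identification of $\|\cdot\|_\infty$ with an interleaving distance on $\R^{n(n+1)/2}$.

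First I would note that the strict equivariant functor $\Cfunc$ furnished by Theorem~\ref{thm:cophenetic-embedding} is, in particular, a colax equivariant functor: its coherence natural transformations $\eta_\e$ are identities, hence isomorphisms, so it is strong and a fortiori colax. It is also fully faithful, again by Theorem~\ref{thm:cophenetic-embedding}. Thus Corollary~\ref{cor:Isometric Embedding} applies directly and tells us that $\Cfunc$ is an isometric embedding with respect to the two interleaving distances; concretely,
\[
d_{\mathbf{PhTree}_n}\big((\X,f,\ell),(\Y,g,\mu)\big)
= d_{(\R^{n(n+1)/2},\Omega)}\big(\Cfunc(\X,f,\ell),\Cfunc(\Y,g,\mu)\big).
\]

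Next I would rewrite the right-hand side in terms of the $\ell^\infty$-norm. The flow $\Omega$ is the $\e$-shift downward on the poset $(\R^{n(n+1)/2},\geq)$, which is exactly the setting of Observation~\ref{obs:geq}; therefore $d_{(\R^{n(n+1)/2},\Omega)}$ coincides with $\|\cdot\|_\infty$. Substituting this and then expanding the definition of the cophenetic vector, whose $(i,j)$ coordinate is $f(\ell(i)\vee\ell(j))$, turns the $\ell^\infty$-norm into a coordinatewise maximum of absolute differences and yields the claimed formula
\[
d_{\mathbf{PhTree}_n}\big((\X,f,\ell),(\Y,g,\mu)\big)
= \max_{1\leq i\leq j\leq n}\big|f(\ell(i)\vee\ell(j)) - g(\mu(i)\vee\mu(j))\big|.
\]

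Because each step is a direct appeal to an already-established result, I do not anticipate a real obstacle; the only places meriting a moment's care are the variance bookkeeping --- checking that \emph{strict} implies \emph{colax}, so that Corollary~\ref{cor:Isometric Embedding} is genuinely applicable --- and the orientation of the flow, namely confirming that it is the downward shift on the $\geq$-poset (as in Observation~\ref{obs:geq}), rather than the upward shift on the $\leq$-poset of the original $\ell^\infty$ proposition, that reproduces the sup-norm. Both checks are immediate, so the substantive content of the corollary resides entirely in Theorem~\ref{thm:cophenetic-embedding}, which the corollary simply repackages into an explicit metric formula.
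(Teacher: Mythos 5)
Your proof is correct and follows essentially the same route as the paper's: apply Corollary~\ref{cor:Isometric Embedding} to the fully faithful (strict, hence colax) equivariant functor from Theorem~\ref{thm:cophenetic-embedding}, then unwind the definitions of $\Cfunc$ and $\|\cdot\|_\infty$ to get the explicit formula. Your explicit checks --- that strict implies colax, and that the downward-shift flow on the $\geq$-poset recovers the sup-norm as in Observation~\ref{obs:geq} --- are exactly the bookkeeping the paper's terse proof leaves implicit.
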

\begin{proof}
	The proof follows directly from Cor.~\ref{cor:Isometric Embedding}.
	The formula follows from combining the definition of $\Cfunc$ with $\|\cdot \|_\infty$.
\end{proof}
This formula shows that the interleaving distance on phylogenetic trees can be computed in $O(n^2)$ time, where $n$ is the number of leaves.


\section{Discussion}
\label{sec:Discussion}

In this paper, we have shown that the $\ell^\infty$-cophenetic metric for phylogenetic trees can be realized as an interleaving distance.  
While from the outside, it might look as if we are taking a massive, formalistic hammer to a simple problem, but there are good reasons for this viewpoint.
Namely, viewing phylogenetic trees as objects of a category with an interleaving distance means that we can extend these ideas to more complicated structures. 
In particular, there is increasing interest in understanding not just phylogenetic trees but phylogenetic networks, and we believe that this interleaving for the tree case can be extended immediately to provide an option for comparison of these structures. 
It is worth observing that the idea of lattices will likely be useful if one wants to construct similar construction on Reeb graphs.
From the side of the study of interleaving distances, this special case gives an example of an interleaving distance which is polynomial time computable.
However, in its most general form, the Reeb graph interleaving distance is graph isomorphism hard.
Thus, we expect there is something to be learned from this special case which can provide either approximation methods or some sort of fixed parameter tractable algorithm to better understand the difficulties inherent in computing this metric.


\bibliographystyle{ieeetr}
\bibliography{Phylo}
	
\end{document}